\documentclass[11pt,a4paper,onecolumn,accepted=2025-01-16]{quantumarticle}
\pdfoutput=1 
\usepackage[T1]{fontenc}

\usepackage{tikz}
\usetikzlibrary{positioning, calc, arrows}
\usepackage{graphicx}
\usepackage{amssymb}
\usepackage{physics}
\usepackage{array}
\usepackage{hyperref}
\usepackage[numbers,sort&compress]{natbib}
\usetikzlibrary{fadings}
\usetikzlibrary{decorations.pathmorphing}

\usepackage[capitalize,nameinlink]{cleveref}

\tikzset{snake it/.style={decorate, decoration=snake}}

\usetikzlibrary{decorations.pathreplacing,decorations.markings}
\usepackage{slashed}

\usepackage{float}% this is to place figures where requested!
\usepackage{subcaption}

\DeclareMathOperator{\SMP}{\mathrm{SMP}}
\DeclareMathOperator{\sq}{\text{sq}}

\newcommand{\Disj}{\mathsf{Disj}}
\newcommand{\IPfunc}{\mathsf{IP}}

\tikzset{
	%Define standard arrow tip
    >=stealth',
    %Define style for boxes
    punkt/.style={
           rectangle,
           rounded corners,
           draw=black, very thick,
           text width=6.5em,
           minimum height=2em,
           text centered},
    % Define arrow style
    pil/.style={
           ->,
           thick,
           shorten <=2pt,
           shorten >=2pt,},
    % style to apply some styles to each segment of a path
  on each segment/.style={
    decorate,
    decoration={
      show path construction,
      moveto code={},
      lineto code={
        \path [#1]
        (\tikzinputsegmentfirst) -- (\tikzinputsegmentlast);
      },
      curveto code={
        \path [#1] (\tikzinputsegmentfirst)
        .. controls
        (\tikzinputsegmentsupporta) and (\tikzinputsegmentsupportb)
        ..
        (\tikzinputsegmentlast);
      },
      closepath code={
        \path [#1]
        (\tikzinputsegmentfirst) -- (\tikzinputsegmentlast);
      },
    },
  },
  % style to add an arrow in the middle of a path
  mid arrow/.style={postaction={decorate,decoration={
        markings,
        mark=at position .5 with {\arrow[#1]{stealth'}}
      }}}
}

%math environments
\newtheorem{theorem}{Theorem}

\newtheorem{corollary}[theorem]{Corollary}

\newtheorem{definition}[theorem]{Definition}

\newtheorem{lemma}[theorem]{Lemma}

\newenvironment{proof}[1][Proof]{\noindent\textbf{#1.}\,}{\ \rule{0.5em}{0.5em}}

\begin{document} 

\title{Linear gate bounds against natural functions for position-verification}

\author[1]{Vahid R. Asadi}
\email{vrasadi@uwaterloo.ca}
\orcid{0000-0001-8354-7463}

\author[1]{Richard Cleve}
\email{cleve@uwaterloo.ca}

\author[1]{Eric Culf}
\email{eculf@uwaterloo.ca}

\author[1,2]{Alex May}
\email{amay@perimeterinstitute.ca}
\orcid{0000-0002-4030-5410}

\affiliation[1]{Institute for Quantum Computing, University of Waterloo, Ontario, Canada}
\affiliation[2]{Perimeter Institute for Theoretical Physics, Waterloo, Ontario, Canada}

\begin{abstract}
A quantum position-verification scheme attempts to verify the spatial location of a prover. 
The prover is issued a challenge with quantum and classical inputs and must respond with appropriate timings. 
We consider two well-studied position-verification schemes known as $f$-routing and $f$-BB84. 
Both schemes require an honest prover to locally compute a classical function $f$ of inputs of length $n$, and manipulate $O(1)$ size quantum systems. 
We prove the number of quantum gates plus single qubit measurements needed to implement a function $f$ is lower bounded linearly by the communication complexity of $f$ in the simultaneous message passing model with shared entanglement. 
Taking $f(x,y)=\sum_i x_i y_i \,\,\text{mod} \,\,2$ to be the inner product function, we obtain an $\tilde{\Omega}(n)$ lower bound on quantum gates plus single qubit measurements. 
The scheme is feasible for a prover with linear classical resources and $O(1)$ quantum resources, and secure against sub-linear quantum resources. 
\end{abstract}

\vfill

\maketitle
%\flushbottom

\pagebreak

\tableofcontents

%%%%%%%%%%%%%%%%%%%%%%%%%%%%%%%%%%%%%%%%%%%%%%%%%%%%%%%%
\section{Introduction}
%%%%%%%%%%%%%%%%%%%%%%%%%%%%%%%%%%%%%%%%%%%%%%%%%%%%%%%%

The subject of position-verification considers how to establish the spatial location of a party or object, by interacting with them remotely. 
Verifying position may be a cryptographic goal in itself, or a building block used for other cryptographic constructions. 
As well, position-verification has recently been understood to be closely connected with other primitives in information-theoretic cryptography \cite{allerstorfer2023relating}, to topics in quantum gravity \cite{may2019quantum, may2020holographic,may2021holographic, may2022complexity}, to Hamiltonian simulation \cite{apel2024security}, and to unclonable secret sharing \cite{ananth2025unclonable}.

In a position-verification scheme, the verifier sends the prover quantum and classical systems and asks for a reply at a set of designated spacetime locations.
See \cref{fig:2dsetup} for a standard set-up in a spacetime with one spatial dimension. 
When the inputs and outputs are all classical there is no unconditionally secure verification scheme \cite{chandran2009position}. 
This is because the prover can intercept the input signals, copy and forward them, and compute the expected replies without ever entering the designated spacetime region. 
Since the no-cloning theorem precludes this copy and forward attack with quantum information, using quantum inputs was suggested as a potential route to secure position-verification \cite{kent2006tagging,kent2011quantum,malaney2016quantum}. 

\begin{figure}
    \centering
    \begin{subfigure}{0.45\textwidth}
    \begin{tikzpicture}[scale=0.6]
    
    \node[below left] at (-4,0) {$c_1$};
    \draw[fill=black] (-4,0) circle (0.15);

    \node[below right] at (4,0) {$c_2$};
    \draw[fill=black] (4,0) circle (0.15);

    \node[below right] at (4,8) {$r_2$};
    \draw[fill=blue] (4,8) circle (0.15);

    \node[below left] at (-4,8) {$r_1$};
    \draw[fill=blue] (-4,8) circle (0.15);
    
    \draw[fill=gray,opacity=0.5] (-1,1) -- (1,1) -- (1,7) -- (-1,7) -- (-1,1);

    \draw[->] (-5.5,1) -- (-5.5,2);
    \node[above] at (-5.5,2) {$t$};
    \draw[->] (-5.5,1) -- (-4.5,1);
    \node[right] at (-4.5,1) {$x$};
    
    \end{tikzpicture}
    \caption{}
    \label{fig:taggingsub1}
    \end{subfigure}
    \hfill
\begin{subfigure}{.45\textwidth}
\begin{tikzpicture}[scale=0.6]

    \node[below left] at (-4,0) {$c_1$};
    \draw[fill=black] (-4,0) circle (0.15);

    \node[below right] at (4,0) {$c_2$};
    \draw[fill=black] (4,0) circle (0.15);

    \node[below right] at (4,8) {$r_2$};
    \draw[fill=blue] (4,8) circle (0.15);

    \node[below left] at (-4,8) {$r_1$};
    \draw[fill=blue] (-4,8) circle (0.15);
    
    \draw[postaction={on each segment={mid arrow}}] (-4,0) -- (-2,2) -- (-2,6) -- (-4,8);
    \draw[postaction={on each segment={mid arrow}}] (4,0) -- (2,2) -- (2,6) -- (4,8);
    \draw[postaction={on each segment={mid arrow}}] (-2,2) -- (0,4) -- (2,6);
    \draw[postaction={on each segment={mid arrow}}] (2,2) -- (0,4) -- (-2,6);
    
    \draw[dashed] (2,2) -- (0,0) -- (-2,2);
    \node[below] at (0,0) {$\ket{\Psi}$};
    
    \draw[fill=yellow] (-2,2) circle (0.3);
    \draw[fill=yellow] (2,2) circle (0.3);
    \draw[fill=yellow] (-2,6) circle (0.3);
    \draw[fill=yellow] (2,6) circle (0.3);
    
    \draw[fill=gray,opacity=0.5] (-1,1) -- (1,1) -- (1,7) -- (-1,7) -- (-1,1);
    
\end{tikzpicture}
\caption{}
\label{fig:taggingsub2}
\end{subfigure}
    
\caption{A position-verification scheme in $1+1$ dimensions. Inputs are given at locations $c_1$, $c_2$. The prover should apply a designated quantum operation to these inputs, then return the outputs to points $r_1$, $r_2$. a) An honest prover enters the designated spacetime region (grey) to apply the needed quantum operation. b) A dishonest prover attempts to reproduce the same operation while acting outside the spacetime region. This leads to the definition of a non-local quantum computation. Figure reproduced from \cite{may2019quantum}.}
\label{fig:2dsetup}
\end{figure}
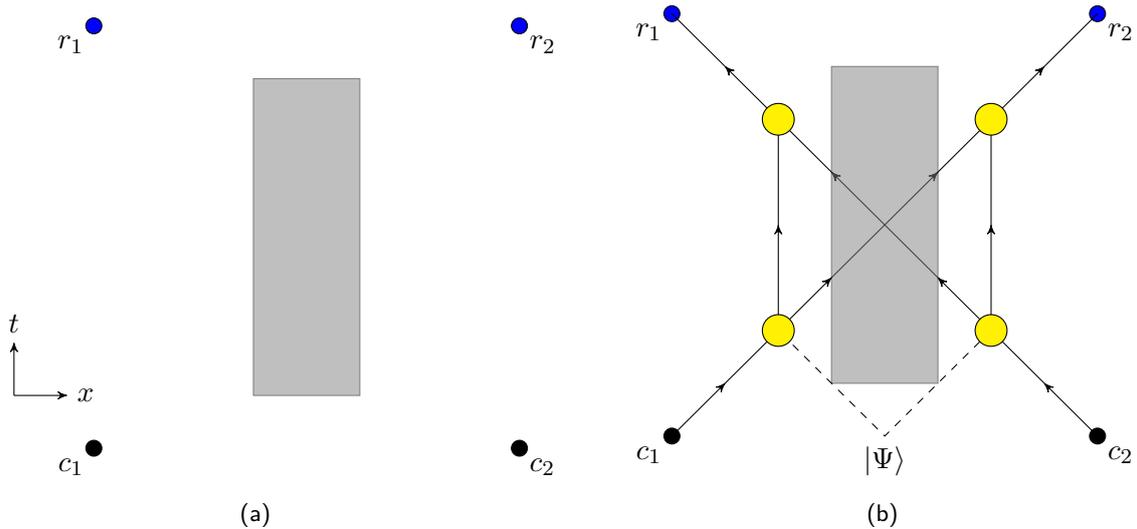

Even with quantum inputs, position-verification was proven insecure in the unconditional setting \cite{buhrman2014position,beigi2011simplified}.
The attacks use entanglement distributed across a spacetime region to simulate operations that might otherwise need to be implemented inside of the region. 
See \cref{fig:non-localandlocal}.
Following this no-go result, focus has shifted to proving security under the assumption of bounded entanglement or communication, with a number of works establishing lower bounds\footnote{Note that some of the existing lower bounds only bound the size of the resource system, rather than any measure of entanglement, or apply in the setting where the communication must be classical.}  \cite{beigi2011simplified,tomamichel2013monogamy,bluhm2021position, may2022complexity, gonzales2019bounds,cleve2026lower,may2026entanglement} and upper bounds \cite{beigi2011simplified,buhrman2013garden,cree2023code,gonzales2019bounds, speelman2015instantaneous, chakraborty2015practical} on entanglement requirements.
Another approach is to assume physical integrity of a device, which could contain a secret key \cite{kent2011quantum,cowperthwaite2023towards}.

In the bounded entanglement setting, particular attention has been paid to classes of protocols where most of the input is classical, with just $O(1)$ qubits, and in particular to schemes where an honest prover need only compute a classical function and do $O(1)$ quantum operations. 
In this context, it has been hoped that the quantum resource requirements would grow with the classical input size, so that a dishonest prover would need large quantum resources. 
Security of these schemes would then be based on an assumption that quantum resources are more difficult to prepare and implement than classical ones.

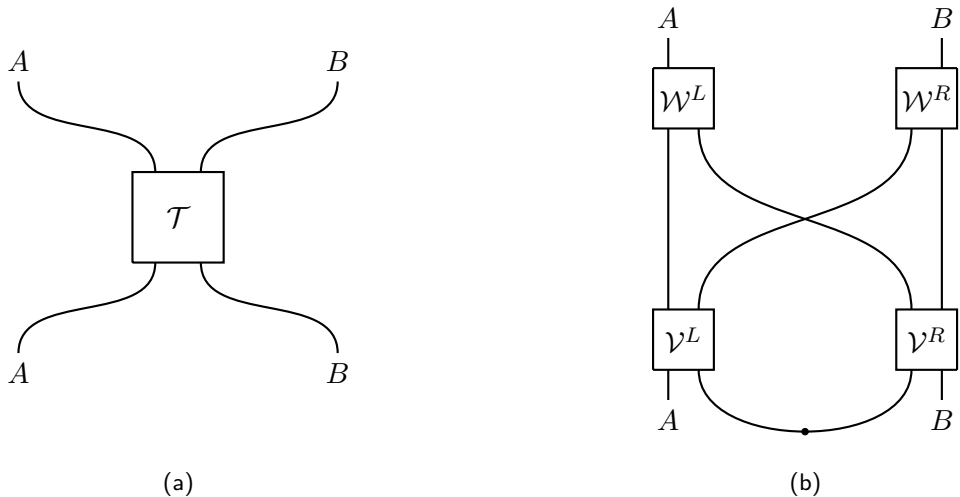
\begin{figure}
    \centering
    \begin{subfigure}{0.45\textwidth}
    \centering
    \begin{tikzpicture}[scale=0.6]
    
    %interaction unitary unitary
    \draw[thick] (-1,-1) -- (-1,1) -- (1,1) -- (1,-1) -- (-1,-1);
    
    %wire into interaction unitary
    \draw[thick] (-3.5,-3) to [out=90,in=-90] (-0.5,-1);
    \node[below] at (-3.5,-3) {$A$};
    \draw[thick] (3.5,-3) to [out=90,in=-90] (0.5,-1);
    \node[below] at (3.5,-3) {$B$};
    
    \draw[thick] (0.5,1) to [out=90,in=-90] (3.5,3);
    \node[above] at (3.5,3) {$B$};
    \draw[thick] (-0.5,1) to [out=90,in=-90] (-3.5,3);
    \node[above] at (-3.5,3) {$A$};
    
    \node at (0,0) {$\mathcal{T}$};

    \node at (0,-5) {$ $};
    
    \end{tikzpicture}
    \caption{}
    \label{fig:local}
    \end{subfigure}
    \hfill
    \begin{subfigure}{0.45\textwidth}
    \centering
    \begin{tikzpicture}[scale=0.4]
    
    %lower left box
    \draw[thick] (-5,-5) -- (-5,-3) -- (-3,-3) -- (-3,-5) -- (-5,-5);
    \node at (-4,-4) {$\mathcal{V}^L$};
    
    %lower right box
    \draw[thick] (5,-5) -- (5,-3) -- (3,-3) -- (3,-5) -- (5,-5);
    \node at (4,-4) {$\mathcal{V}^R$};
    
    %top right box
    \draw[thick] (5,5) -- (5,3) -- (3,3) -- (3,5) -- (5,5);
    \node at (4,4) {$\mathcal{W}^R$};
    
    %top left box
    \draw[thick] (-5,5) -- (-5,3) -- (-3,3) -- (-3,5) -- (-5,5);
    \node at (-4,4) {$\mathcal{W}^L$};
    
    %left vertical wire
    \draw[thick] (-4.5,-3) -- (-4.5,3);
    
    %right vertical wire
    \draw[thick] (4.5,-3) -- (4.5,3);
    
    %left to right wire
    \draw[thick] (-3.5,-3) to [out=90,in=-90] (3.5,3);
    
    %right to left wire
    \draw[thick] (3.5,-3) to [out=90,in=-90] (-3.5,3);
    
    %entanglement
    \draw[thick] (-3.5,-5) to [out=-90,in=-90] (3.5,-5);
    \draw[black] plot [mark=*, mark size=3] coordinates{(0,-7.05)};
    
    %input wires
    \draw[thick] (-4.5,-6) -- (-4.5,-5);
    \node[below] at (-4.5,-6) {$A$};
    \draw[thick] (4.5,-6) -- (4.5,-5);
    \node[below] at (4.5,-6) {$B$};
    
    %output wires
    \draw[thick] (4.5,5) -- (4.5,6);
    \node[above] at (-4.5,6) {$A$};
    \draw[thick] (-4.5,5) -- (-4.5,6);
    \node[above] at (4.5,6) {$B$};
    
    \end{tikzpicture}
    \caption{}
    \label{fig:non-localcomputation}
    \end{subfigure}
    \caption{Local and non-local computations. a) A channel $\mathcal{T}_{AB\rightarrow AB}$ is implemented by directly interacting the input systems. b) A non-local quantum computation. The goal is for the action of this circuit on the $AB$ systems to approximate the channel $\mathcal{T}_{AB\rightarrow AB}$.}
    \label{fig:non-localandlocal}
\end{figure}

In a recent work, this hope was partly realized \cite{bluhm2021position}. 
The authors study two schemes referred to as $f$-routing and $f$-BB84.
In these schemes, an honest prover needs to compute a Boolean function $f$, then perform $O(1)$ quantum operations conditional on the value of $f$. 
For these tasks \cite{bluhm2021position} considers protocols that act unitarily on a shared resource system plus the inputs in the first round. 
In this setting, they prove that, with high probability over random choices of $f$, a dishonest prover needs to use a resource system composed of $q$ qubits with $q$ bounded below linearly in the number of classical input bits. 
Of the two variants for which this bound is proven, $f$-BB84 has the additional property that it can be made fully loss tolerant \cite{allerstorfer2023making}. 
This means that in experimental implementations, where most photons sent over long distances are lost, the $f$-BB84 protocol maintains a linear lower bound. 

While attractive, two important caveats remain in the practicality of these verification schemes and the applicability of this proof. 
First, as we discuss in more detail later on, the unitary view of \cite{bluhm2021position} means some resources that in a physical implementation can be classical are included in the system size they lower bound. 
Ideally, one would bound the quantum resources in the physical protocol, not the quantum resources in a purified (unitary) view of the protocol.

As well, the perspective of \cite{bluhm2021position} focuses on the distributed resources of the honest and dishonest prover but ignores the local computational resources. 
Considering this, we note that a random function, with overwhelming probability, is of exponential complexity. 
Thus in both $f$-routing and $f$-BB84 with random $f$, the honest prover needs exponential classical resources and $O(1)$ quantum resources, while the dishonest prover needs at least linear quantum resources and (at minimum) exponential classical resources (since they've also computed $f$). 
From this perspective, the honest prover's actions are not much easier than the dishonest one, and in any case a protocol requiring the computation of an exponential complexity function is not practical.\footnote{We can also compare this to \cite{liu2021beating}, which requires the honest prover to have a polynomial-time quantum computer and allows the use of only classical communication. From a computational perspective this is \emph{more} feasible than $f$-BB84 or $f$-routing scheme with a random choice of function, since BQP is weaker than EXP.} 
An interesting alternative given in \cite{bluhm2021position} is to use a low complexity function with a large communication complexity. 
For the inner product function, the authors prove a lower bound of $q = \Omega(\log n)$ with $n$ the classical input size.
While now the computation of the honest prover is linear complexity in the input size, it is still exponential complexity in the size of the quantum resource manipulated by the dishonest prover.

In this article, we focus on the computational requirements of the honest and dishonest players and give a new bound against $f$-routing and $f$-BB84.
Our bound follows from an extension of a lower bound strategy used in \cite{bluhm2021position}. 
We show that given a function $f$ a successful attack on $f$-BB84 or $f$-routing requires a dishonest player to implement a number of quantum gates linear in the simultaneous message passing with shared entanglement ($\SMP^*$) cost of $f$. 
As a concrete example, this places a linear (in the input size) lower bound against the number of quantum gates needed by a dishonest player to implement the inner product function.
Meanwhile, the honest player can implement $O(1)$ quantum gates and only linear classical gates. 
Our bound is also robust, applying for $f$-routing and $f$-BB84 protocols that have sufficiently small errors.\footnote{Specifically, the $f$-BB84 protocol should be correct with probability at least $0.945$ for every input, while the $f$-routing protocol needs output fidelity to be at least $0.9996$ for every input.} 

\vspace{0.2cm}
\noindent \textbf{Technical overview}
\vspace{0.2cm}

In simultaneous message passing ($\SMP$), two players receive inputs $x$ and $y$ and send messages to a referee that should determine $f(x,y)$. 
In $f$-routing ($f$-BB84 works similarly), a quantum system $Q$ is brought left or right based on the value of $f(x,y)$.
In relating these two settings, recall that a dishonest prover in $f$-routing has two agents, who intercept the input signals (see \cref{fig:non-localcomputation}). 
These become the two players in the $\SMP$ scenario. 
The key idea to reduce $f$-routing to $\SMP$ is to show that any successful attack on $f$-routing has a state after the first round operations that determines whether the input $Q$ is sent left or sent right. 
This means that this state determines the value of $f(x,y)$. 
The reduction works by having the players communicate data in an $\SMP$ protocol that is sufficient to reproduce this state to a referee. 
The referee can then determine if this is a state with $Q$ on the left or $Q$ on the right, and hence determine $f(x,y)$. 

Our main technical contribution is to adapt this reduction to lower bound the number of quantum gates and measurements applied by a dishonest prover, rather than to bound the dimensionality of their resource system as was done already in \cite{bluhm2021position}.  
As well, we work in an unpurified model, allowing the players to share a mixed state and apply general quantum operations.
Heuristically, a simple $f$-routing protocol would lead to a simple description of how to prepare this state, and hence to a good $\SMP$, so lower bounds on $\SMP$ lead to lower bounds on the complexity of the $f$-routing protocol. 
Importantly, we are interested in bounding the number of quantum operations performed by the prover, while allowing free classical processing. 
Direct use of the reduction from \cite{bluhm2021position} would instead bound the total number of classical and quantum operations. 
To obtain a bound on quantum operations alone requires that we modify the reduction to $\SMP$, and in particular reduce to simultaneous message passing with shared entanglement allowed between the players. 

The general form of the bound we obtain is stated below. 
In our bound, $q$ is the number of qubits held by each player, $C_M(f)$ is the number of measurements they jointly make, and $C_G(f)$ is the number of gates they jointly apply. 
We denote by $\SMP^*_{\epsilon',\delta}(f)$ the communication cost in the simultaneous message passing model with shared entanglement allowed, where we require $\epsilon'$ correctness on a fraction $1-\delta$ of the inputs. 
We have then
\begin{align}
    (\log(q)+1)(2C_G(f) +C_M(f)) \geq \SMP^*_{\epsilon',\delta}(f).
\end{align}
This is stated assuming the gate set is size $4$, but is easy to adapt to any gate set. 
Importantly, the bound holds even when allowing free classical processing, including the use of mid-circuit measurements and classical computations that make use of those mid-circuit measurement outcomes. 
 
We then exploit linear lower bounds on $\SMP^*$ for the inner product function given in \cite{cleve1998quantum} to prove an explicit lower bound,
\begin{align}
    (\log q + 1)(2C_G(\IPfunc)+ C_M(\IPfunc)) = \Omega(n).
\end{align}
Here $n$ is the number of input bits to the inner product function.
This bound on quantum gates and measurements holds even when allowing small errors in the $f$-routing or $f$-BB84 protocols, so long as they are sufficiently small.

%%%%%%%%%%%%%%%%%%%%%%%%%%%%%%%%%%%%%%%%%%%%%%%%%%%%%%%%
\section{Background and tools}
%%%%%%%%%%%%%%%%%%%%%%%%%%%%%%%%%%%%%%%%%%%%%%%%%%%%%%%%

%%%%%%%%%%%%%%%%%%%%%%%%%%%%%%%%%%%%%%%%%%%%%%%%%%%%%%%%
\subsection{Distance measures and entropy inequalities}
%%%%%%%%%%%%%%%%%%%%%%%%%%%%%%%%%%%%%%%%%%%%%%%%%%%%%%%%

In this section, we give a few definitions and collect some standard results for reference. 

Define the fidelity by
\begin{align}
    F(\rho,\sigma) = \left(\tr(\sqrt{\sqrt{\sigma}\rho\sqrt{\sigma}})\right)^2 ,
\end{align}
so that for pure states $F(\ket{\psi},\ket{\phi}) = |\braket{\psi}{\phi}|^2$. 
We also use the trace norm,
\begin{align}
    \Vert\rho -\sigma\Vert_1 \equiv \tr \sqrt{(\rho-\sigma)^\dagger(\rho-\sigma)}. 
\end{align}
The Fuchs-van de Graaf inequalities are
\begin{align}\label{eq:FVDG}
    1-\sqrt{F(\rho,\sigma)}\leq \frac{1}{2}\Vert \rho -\sigma\Vert_1 \leq \sqrt{1-F(\rho,\sigma)}.
\end{align}

We will make use of the complementary information trade-off (CIT) inequality \cite{renes2009conjectured}, which we state below. 
\begin{theorem}[\cite{renes2009conjectured}]\label{thm:CIT}
    Let ${\psi}_{REF}$ be an arbitrary tripartite state, with $R$ a single qubit. 
    We consider measurements on the $R$ system that produce a measurement result we store in a register $Z$. We consider measurements in both the computational and Hadamard bases, and denote the post-measurement state when measuring in the computational basis by $\rho_{ZEF}$, and when measuring in the Hadamard basis by $\sigma_{ZEF}$.  
    Then,
    \begin{align}
        H(Z|E)_\rho + H(Z|F)_\sigma \geq 1 .
    \end{align}
\end{theorem}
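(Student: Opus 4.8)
The plan is to read \cref{thm:CIT} as the \emph{strong entropic uncertainty relation with quantum side information} of~\cite{renes2009conjectured}, specialized to a single qubit measured in two mutually unbiased bases. In the general formulation one has a tripartite \emph{pure} state $\ket{\psi}_{REF}$, performs a measurement in basis $\{\ket{j}\}$ on $R$, recording the outcome in $Z$, to get $\rho_{ZEF}$, and a measurement in basis $\{\ket{\tilde{k}}\}$ on $R$, recording the outcome in $Z$, to get $\sigma_{ZEF}$, and the claim is
\[
    H(Z|E)_\rho + H(Z|F)_\sigma \geq \log\frac{1}{c}, \qquad c = \max_{j,k}\lvert\braket{j}{\tilde{k}}\rvert^2 .
\]
For the computational and Hadamard bases on a qubit every overlap $\lvert\braket{j}{\tilde{k}}\rvert^2$ equals $1/2$, so $\log(1/c)=1$ and the stated inequality is exactly this special case. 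It therefore suffices to prove the general relation (or, since the statement is quoted verbatim from the literature, simply to invoke~\cite{renes2009conjectured}).

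For a self-contained argument I would proceed in two steps. First, establish the min-/max-entropy version $H_{\min}(Z|E)_\rho + H_{\max}(Z|F)_\sigma \geq \log(1/c)$: dilate each measurement channel to a Stinespring isometry $V:R\to ZR$ that keeps the post-measurement system, producing the pure states $V\ket{\psi}$ on $ZREF$; apply the exact duality relation for conditional entropies, $H_{\max}(Z|F) = -H_{\min}(Z|RE)$ evaluated in the Hadamard-dilated pure state, so that the inequality becomes a comparison between the two dilated states; and close the comparison with the data-processing inequality for $H_{\min}$ together with the fact that relating the post-measurement state of one basis to that of the other is a measure-and-prepare operation whose disturbance is controlled by the overlap $c$. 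Second, lift to von Neumann entropies: apply the min-/max relation to $n$ i.i.d.\ copies $\ket{\psi}^{\otimes n}$, for which the overlap constant becomes $c^{\,n}$ and hence the right-hand side becomes $n\log(1/c)$; then invoke the fully quantum asymptotic equipartition property, which equates $\lim_{n}\tfrac{1}{n}H_{\min}^{\varepsilon}(Z^n|E^n)$ and $\lim_{n}\tfrac{1}{n}H_{\max}^{\varepsilon}(Z^n|F^n)$ with $H(Z|E)_\rho$ and $H(Z|F)_\sigma$; taking $n\to\infty$ (and $\varepsilon\to0$) gives $H(Z|E)_\rho + H(Z|F)_\sigma \geq \log(1/c) = 1$.

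The main obstacle is the first step — the min-/max-entropy uncertainty relation — and within it, making the overlap constant $c$ appear with the correct coefficient; the duality relation and the data-processing inequality are standard, but identifying the measure-and-prepare channel relating the two post-measurement states and tracking how $c$ enters its norm is the delicate point. In the mutually-unbiased-bases, single-qubit case this is mild, since $c=1/2$ can simply be read off. An alternative worth keeping in mind is the direct proof from non-negativity and data processing of the relative entropy, which bypasses the asymptotic-equipartition step at the cost of a longer chain of inequalities. For the present paper, where \cref{thm:CIT} is used only as an off-the-shelf tool, citing~\cite{renes2009conjectured} is the cleanest choice.
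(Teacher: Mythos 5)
The paper gives no proof of \cref{thm:CIT}; it is imported verbatim from the cited reference, so there is nothing internal to compare against. Your reading of the statement as the $c=\max_{j,k}|\braket{j}{\tilde{k}}|^2=1/2$ special case of the entropic uncertainty relation with quantum side information is exactly right, and the route you sketch (the min-/max-entropy version via duality and data processing, lifted to von Neumann entropies by the asymptotic equipartition property, or alternatively the direct relative-entropy argument) is the standard proof in the literature; your conclusion that citing the reference is the appropriate treatment here matches what the paper does.
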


We also use Fano's inequality, given next. 
\begin{theorem}\textbf{(Fano's inequality \cite{fano1961transmission})}\label{thm:Fano} Let $X$ and $Y$ be random variables, and let $\hat{X}$ be a random variable describing a guess for the value of $X$ computed from a sample of $Y$. 
Then $p_{err}=\text{Pr}[\hat{X}\neq X]$ satisfies
\begin{align}
    h(p_{err})+p_{err}\log(|X|-1) \geq H(X|Y)
\end{align}
where $h(x)=-x\log x -(1-x)\log(1-x)$ is the binary entropy function.
\end{theorem}

Another useful statement is the continuity of the conditional entropy \cite{winter2016tight}. 

\begin{theorem}[\cite{winter2016tight}]\label{thm:condentropycont}
Suppose that
\begin{align}
    \frac{1}{2}\Vert\rho_{AB}-\sigma_{AB}\Vert_1 \leq \epsilon ,
\end{align}
and let $h(x)=-x\log x - (1-x)\log (1-x)$ be the binary entropy function. Then,
\begin{align}
    |H(A|B)_\rho - H(A|B)_\sigma| \leq 2\epsilon \log d_A + (1+\epsilon) h\left(\frac{\epsilon}{1+\epsilon}\right) ,
\end{align}
where $d_A$ is the dimension of the subsystem $A$.
\end{theorem}

Another object we will make use of is the squashed entanglement, 
\begin{align}
    E_{\sq}(A:B)_\rho \equiv \min_{\sigma_{ABC}:\tr_C\sigma=\rho} \frac{1}{2}I(A:B|C)_\sigma.
\end{align}
The squashed entanglement is faithful \cite{brandao2011faithful}, meaning it is zero if and only if it is evaluated on a separable state. 
For our purposes it is important that the squashed entanglement satisfies the following inequality \cite{koashi2004monogamy}, which expresses monogamy, 
\begin{align}\label{eq:squashedmonogamy}
    E_{\sq}(Q:A)_\sigma+E_{\sq}(Q:B)_\sigma\leq E_{\sq}(Q:AB)_\sigma.
\end{align}
Additionally, we need continuity of the squashed entanglement.
\begin{theorem}[\cite{li2018squashed}]\label{thm:squashedcontinuity} Suppose that
    \begin{align}
        \frac{1}{2}\Vert \rho-\sigma \Vert_1\leq \epsilon
    \end{align}
    and let $h(x) =-x \log x -(1-x) \log(1-x)$ be the binary entropy function. Then
    \begin{align}
    |E_{\sq}(A:B)_\sigma - E_{\sq}(A:B)_\rho| &\leq 4\epsilon \log d_A + 2(1+\epsilon)h\left(\frac{\epsilon}{1+\epsilon} \right),
    \end{align}
    where $d_A$ is the dimension of the subsystem $A$. 
\end{theorem}
We also use that the squashed entanglement satisfies the data processing inequality, 
\begin{align}\label{eq:squasheddataprocessing}
    E_{\sq}(A:B)_{\sigma_{AB}} \geq E_{\sq}(A:B)_{\mathcal{N}_A(\sigma_{AB})}.
\end{align}
Finally, we will use that the squashed entanglement is bounded above by the minimal log-dimension of its two inputs, $E_{\sq}(X:Y)\leq \min\{n_X,n_Y\}$.

%%%%%%%%%%%%%%%%%%%%%%%%%%%%%%%%%%%%%%%%%%%%%%%%%%%%%%%%
\subsection{Communication complexity}
%%%%%%%%%%%%%%%%%%%%%%%%%%%%%%%%%%%%%%%%%%%%%%%%%%%%%%%%

We will make use of a reduction from $f$-routing and $f$-BB84 to communication complexity scenarios. 
Specifically, we will be interested in the \emph{simultaneous message passing ($\SMP$)} scenario. 

A simultaneous message passing scenario is defined by a choice of function $f:\{0,1\}^n\times \{0,1\}^n\rightarrow \{0,1\}$. 
The scenario involves three parties, Alice, Bob, and the referee. 
Alice receives $x\in \{0,1\}^n$ and Bob receives $y\in \{0,1\}^n$. 
Alice and Bob compute messages $m_A, m_B$ from their local resources (including shared randomness) and the inputs they receive, and send their messages to the referee. 
Alice and Bob succeed if the referee can compute $f(x,y)$ from their messages. 
We define the $\SMP$ cost of $f$, denoted $\SMP(f)$ to be $\min_P(|m_A|+|m_B|)$ where the minimization is over choices of protocols.

There are several variations of the basic $\SMP$ scenario. 
For example, we can allow the referee to only succeed with some probability (taken over the shared randomness and selection of inputs,
we will always assume the input distribution is uniform in this work).
We denote the $\SMP$ cost in the case where they succeed with probability $1-\epsilon$ on at least $1-\delta$ fraction of inputs by $\SMP_{\epsilon,\delta}(f)$. 
We can also allow Alice and Bob to share entanglement as opposed to just classical randomness, and/or to send quantum messages. 
Our focus in this work is on the case where the messages are classical but they share entanglement. 
In this case, we denote the $\SMP$ cost by $\SMP^*_{\epsilon,\delta}(f)$. 

A formal definition follows.

\begin{definition}[$(\epsilon,\delta)$-$\SMP$ complexity]\label{def:SMP}
Let $f : \{0,1\}^n\times \{0,1\}^n\rightarrow \{0,1\}$ be a function, and $\epsilon, \delta \in [0,1]$ be parameters. An $\SMP$ protocol $P$ for $f$ consists of three algorithms Alice, Bob, and a referee. Alice receives $x\in \{0,1\}^n$ as input and outputs $m_A \in \{0,1\}^*$, Bob receives $y\in \{0,1\}^n$ as input and outputs $m_B \in \{0,1\}^*$, and the referee receives $m_A, m_B$ and outputs a bit $c=P(x,y)$. A protocol $P$ is $(\epsilon,\delta)$-correct if there exists $S \subseteq \{0,1\}^n\times \{0,1\}^n$ such that $\abs{S}\geq (1-\delta) \cdot 2^{2n}$, and
\begin{align}
    \forall (x,y) \in S : \Pr[P(x,y)=f(x,y)] \geq 1-\epsilon .
\end{align}
The $(\epsilon,\delta)$-$\SMP$ cost is defined by
\begin{equation*}
    \SMP_{\epsilon,\delta}(f) =  \min_{P: P \text{ is $(\epsilon,\delta)$-correct}}\left(\max_{(x,y)}(\abs{m_A} + \abs{m_B}) \right).
\end{equation*}
Similarly, we can define $\SMP^*_{\epsilon,\delta}(f)$ for the case where Alice and Bob share entanglement.
\end{definition}

Another basic communication complexity scenario allows communication back and forth between Alice and Bob. 
We can denote the minimal message size when using classical communication (we use the total number of bits sent by Alice and Bob) and allowing shared entanglement by $C^*_{\epsilon}(f)$ when the success probability of the protocol is at least $1-\epsilon$ (taken over the choice of inputs and internal randomness of the protocol).

An easy observation is that
\begin{align}\label{eq:smpandoneway}
    \SMP^*_{\epsilon,\delta}(f) \geq C^*_{\epsilon'}(f).
\end{align}
Here $\epsilon'=\delta+(1-\delta)\epsilon$.
The inequality follows because any $\SMP^*$ protocol can be turned into a two-way communication complexity scenario by having Alice send her message to Bob instead of the referee, and Bob run the same computation as the referee would in the $\SMP^*$ protocol.
The average error in this protocol is $\epsilon'=\delta+(1-\delta)\epsilon$ because the new protocol may be, in the worst case, wrong always on a fraction $\delta$ of inputs, and is wrong with probability at most $\epsilon$ on a fraction $1-\delta$ of inputs.

A standard function studied in communication complexity is the inner product, 
\begin{align}
    \IPfunc(x,y) = \sum_i x_i y_i \,\, \text{mod}\,\, 2 .
\end{align}
Intuitively, this is a difficult function to compute in communication complexity scenarios because the output depends sensitively on every bit of the input. 
More concretely, the following lower bound is proven in \cite{cleve1998quantum}. 
\begin{align}\label{eq:clevebound}
    C^*_\epsilon(\IPfunc) \geq \max \left\{\frac{1}{2}(1-2\epsilon)^2, (1-2\epsilon)^4\right\}n-1/2 .
\end{align}
We briefly comment on the proof of this theorem given in \cite{cleve1998quantum}. 
While not explicitly stated there, an inspection of their proof reveals their lower bound applies in the average-case setting.
In fact, it is only necessary for the protocol to work with probability $1-\epsilon$ over a uniform choice of input $x$ for at least one fixed choice of input $y$, or over a uniform choice of input $y$ at any fixed choice of $x$. 
A similar lower bound is shown in \cite{nayak2002communication} (Corollary 4.3), who find
\begin{align}\label{eq:IPCCbound}
    C^*_\epsilon(\IPfunc) \geq \frac{1}{2}n + 2\log(1-2\epsilon).
\end{align}
This also holds in the average-case (over inputs) setting.
Below in our explicit bounds, we apply this bound, rather than \cref{eq:clevebound}. 
Note however that we could also use \cref{eq:clevebound}, which is tighter for small values of the error parameter. 

As another example, it is shown in \cite{razborov2003quantum} that letting $\Disj : \{0,1\}^n \times \{0,1\}^n \to \{0,1\}$ be the disjointness function, we have whenever $\epsilon< 1/2$ that,
\begin{align}\label{eq:disjointnessbound}
    C^*_\epsilon(\Disj) = \Omega(\sqrt{n}).
\end{align}  

%%%%%%%%%%%%%%%%%%%%%%%%%%%%%%%%%%%%%%%%%%%%%%%%%%%%%%%%
\section{Analysis of \texorpdfstring{$f$}{TEXT}-BB84}
%%%%%%%%%%%%%%%%%%%%%%%%%%%%%%%%%%%%%%%%%%%%%%%%%%%%%%%%

%%%%%%%%%%%%%%%%%%%%%%%%%%%%%%%%%%%%%%%%%%%%%%%%%%%%%%%%
\subsection{Definition and the strategy model}
%%%%%%%%%%%%%%%%%%%%%%%%%%%%%%%%%%%%%%%%%%%%%%%%%%%%%%%%

We give the following definition of a \emph{qubit $f$-BB84} task.
In this definition and the rest of the text, we refer to the two agents of the prover (who sit on the left and right of the grey region in \cref{fig:taggingsub2}) as Alice and Bob. 
Because we are viewing cheating in the position-verification scenario as a form of a quantum game, which can be considered separately aside from the connection to position-verification, we also rename the role of the verifier as the referee. 

\begin{definition}\label{def:qubitfbb84}
    A \textbf{qubit} $f$\textbf{-BB84} task is defined by a choice of Boolean function $f:\{ 0,1\}^{n}\times\{0,1\}^n\rightarrow \{0,1\}$, and a $2$ dimensional Hilbert space $\mathcal{H}_Q$.
    Inputs $x\in \{0,1\}^{n}$ and quantum system $Q$ are given to Alice, and input $y\in \{0,1\}^{n}$ is given to Bob.
    The system $Q$ is in the maximally entangled state with a reference system $R$.
    Alice and Bob exchange one round of communication, with the combined systems received or kept by Alice labelled $M$ and the systems received or kept by Bob labelled $M'$.
    Define projectors
    \begin{align}
        \Pi^{q,b}=H^q\ketbra{b}{b}H^q .
    \end{align}
    The referee will measure $\{\Pi^{f(x,y),0},\Pi^{f(x,y),1}\}$ on the $R$ system and find measurement outcome $b\in \{0,1\}$. 
    The qubit $f$-BB84 task is completed $\epsilon$-correctly on input $(x,y)$ if Alice and Bob both output $b$. 
    More formally, Alice and Bob succeed $\epsilon$-correctly if they measure POVMs $\{\Lambda^{x,y,0}_{M},\Lambda^{x,y,1}_{M}\}$, $\{\Lambda^{x,y,0}_{M'},\Lambda^{x,y,1}_{M'}\}$ such that,
    \begin{align}
        \sum_b \tr(\Pi^{f(x,y),b}_R\otimes \Lambda^{x,y,b}_{M} \otimes \Lambda^{x,y,b}_{M'} \rho_{RMM'}) \geq 1-\epsilon .
    \end{align}
\end{definition}

Next, we give a fully general model capturing strategies that complete the $f$-BB84 task in the form of a non-local quantum computation.   

\begin{enumerate}
    \item Alice and Bob share a resource system ${\psi}_{AB}$. 
    \item The referee prepares $\Psi^+_{RQ}$ and hands $Q$ to Alice, preparing a joint state $\Psi^+_{RQ}\otimes \psi_{AB}$.
    \item At the same time as the above, Alice receives $x\in\{0,1\}^n$ and Bob receives $y\in\{0,1\}^n$.
    \item Alice applies $\mathcal{N}^x_{QA\rightarrow M_0M_0'}$, Bob applies $\mathcal{M}^y_{B\rightarrow M_1M_1'}$. Label $M=M_0M_1$, $M'=M_0'M_1'$ so that their joint state after the first round is 
    \begin{align}
        \rho_{RMM'}=\mathcal{N}^x_{QA\rightarrow M_0M_0'}\otimes \mathcal{M}^y_{B\rightarrow M_1M_1'}(\Psi^+_{RQ}\otimes \psi_{AB}) .
    \end{align} 
    \item $M_0$ and $M_1$ are sent to Alice, so that she holds $M$. At the same time, $M_0'$ and $M_1'$ are sent to Bob so that he holds $M'$. The (classical) inputs $x$ and $y$ are copied and sent to both parties. 
    \item Alice, Bob and the referee all compute $f(x,y)$. The referee measures $R$ in the $f(x,y)$ basis, obtaining measurement outcome $b$. Alice and Bob apply POVMs $\{\Lambda^{x,y,0}_{M},\Lambda^{x,y,1}_{M}\}$ and $\{\Lambda^{x,y,0}_{M'},\Lambda^{x,y,1}_{M'}\}$, then both output their measurement outcomes. 
\end{enumerate}
Recall that Alice and Bob succeed when they both obtain outcome $b$. 
See \cref{fig:fBB84andfR} for an illustration of a general protocol for $f$-BB84.

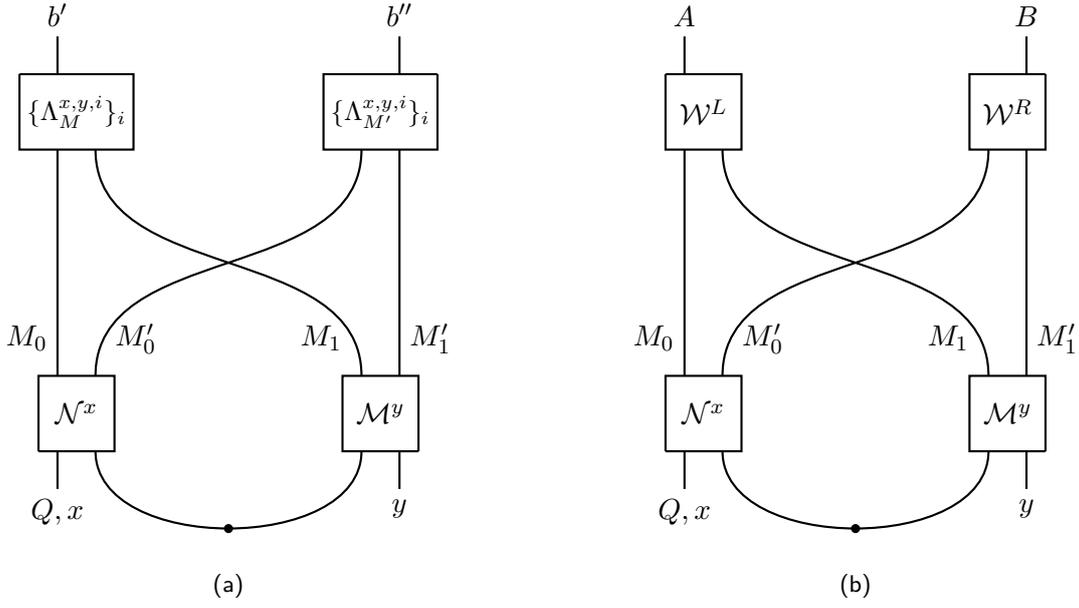
\begin{figure*}
    \centering
    \begin{subfigure}{0.45\textwidth}
    \centering
    \begin{tikzpicture}[scale=0.5]
    
    %lower left box
    \draw[thick] (-5,-5) -- (-5,-3) -- (-3,-3) -- (-3,-5) -- (-5,-5);
    \node at (-4,-4) {$\mathcal{N}^x$};
    
    %lower right box
    \draw[thick] (5,-5) -- (5,-3) -- (3,-3) -- (3,-5) -- (5,-5);
    \node at (4,-4) {$\mathcal{M}^y$};
    
    %top right box
    \draw[thick] (5.5,5) -- (5.5,3) -- (2.5,3) -- (2.5,5) -- (5.5,5);
    \node at (4,4) {\small{$\{\Lambda^{x,y,i}_{M'}\}_i$}};
    
    %top left box
    \draw[thick] (-5.5,5) -- (-5.5,3) -- (-2.5,3) -- (-2.5,5) -- (-5.5,5);
    \node at (-4,4) {\small{$\{\Lambda^{x,y,i}_M\}_i$}};
    
    %left vertical wire
    \draw[thick] (-4.5,-3) -- (-4.5,3);
    \node[left] at (-4.5,-2) {$M_0$};
    
    %right vertical wire
    \draw[thick] (4.5,-3) -- (4.5,3);
    \node[right] at (4.5,-2) {$M_1'$};
    
    %left to right wire
    \draw[thick] (-3.5,-3) to [out=90,in=-90] (3.5,3);
    \node[right] at (-3.25,-2) {$M_0'$};
    
    %right to left wire
    \draw[thick] (3.5,-3) to [out=90,in=-90] (-3.5,3);
    \node[left] at (3.25,-2) {$M_1$};
    
    %entanglement
    \draw[thick] (-3.5,-5) to [out=-90,in=-90] (3.5,-5);
    \draw[black] plot [mark=*, mark size=3] coordinates{(0,-7.05)};
    
    %input wires
    \draw[thick] (-4.5,-6) -- (-4.5,-5);
    \node[below] at (-4.5,-6) {$Q,x$};
    \draw[thick] (4.5,-6) -- (4.5,-5);
    \node[below] at (4.5,-6) {$y$};
    
    %output wires
    \draw[thick] (4.5,5) -- (4.5,6);
    \node[above] at (-4.5,6) {$b'$};
    \draw[thick] (-4.5,5) -- (-4.5,6);
    \node[above] at (4.5,6) {$b''$};
    
    \end{tikzpicture}
    \caption{}
    \label{fig:fBB84}
    \end{subfigure}
    \hfill
    \begin{subfigure}{0.45\textwidth}
    \centering
    \begin{tikzpicture}[scale=0.5]
    
    %lower left box
    \draw[thick] (-5,-5) -- (-5,-3) -- (-3,-3) -- (-3,-5) -- (-5,-5);
    \node at (-4,-4) {$\mathcal{N}^x$};
    
    %lower right box
    \draw[thick] (5,-5) -- (5,-3) -- (3,-3) -- (3,-5) -- (5,-5);
    \node at (4,-4) {$\mathcal{M}^y$};
    
    %top right box
    \draw[thick] (5,5) -- (5,3) -- (3,3) -- (3,5) -- (5,5);
    \node at (4,4) {$\mathcal{W}^R$};
    
    %top left box
    \draw[thick] (-5,5) -- (-5,3) -- (-3,3) -- (-3,5) -- (-5,5);
    \node at (-4,4) {$\mathcal{W}^L$};
    
    %left vertical wire
    \draw[thick] (-4.5,-3) -- (-4.5,3);
    \node[left] at (-4.5,-2) {$M_0$};
    
    %right vertical wire
    \draw[thick] (4.5,-3) -- (4.5,3);
    \node[right] at (4.5,-2) {$M_1'$};
    
    %left to right wire
    \draw[thick] (-3.5,-3) to [out=90,in=-90] (3.5,3);
    \node[right] at (-3.25,-2) {$M_0'$};
    
    %right to left wire
    \draw[thick] (3.5,-3) to [out=90,in=-90] (-3.5,3);
    \node[left] at (3.25,-2) {$M_1$};
    
    %entanglement
    \draw[thick] (-3.5,-5) to [out=-90,in=-90] (3.5,-5);
    \draw[black] plot [mark=*, mark size=3] coordinates{(0,-7.05)};
    
    %input wires
    \draw[thick] (-4.5,-6) -- (-4.5,-5);
    \node[below] at (-4.5,-6) {$Q,x$};
    \draw[thick] (4.5,-6) -- (4.5,-5);
    \node[below] at (4.5,-6) {$y$};
    
    %output wires
    \draw[thick] (4.5,5) -- (4.5,6);
    \node[above] at (-4.5,6) {$A$};
    \draw[thick] (-4.5,5) -- (-4.5,6);
    \node[above] at (4.5,6) {$B$};
    
    \end{tikzpicture}
    \caption{}
    \label{fig:fR}
    \end{subfigure}
    \caption{a) A general strategy for an $f$-BB84 scheme. Alice applies $\mathcal{N}^x$ in the first round, Bob applies $\mathcal{M}^y$. They communicate in one simultaneous exchange, and then apply measurements to their local systems. They succeed if $b=b'=b''$, with $b$ determined by measuring a reference maximally entangled with $Q$. b) A general strategy for an $f$-routing scheme. The first round operations are the same as before. In the second round, Alice and Bob apply channels mapping to qubit systems $A$, $B$. If $f(x,y)=0$, $A$ should be maximally entangled with the reference $R$. If $f(x,y)=1$, then $B$ should be maximally entangled with $R$.}
    \label{fig:fBB84andfR}
\end{figure*}

%%%%%%%%%%%%%%%%%%%%%%%%%%%%%%%%%%%%%%%%%%%%%%%%%%%%%%%%
\subsection{Basis is determined in the first round}
%%%%%%%%%%%%%%%%%%%%%%%%%%%%%%%%%%%%%%%%%%%%%%%%%%%%%%%%

In this section, we begin our analysis of the $f$-BB84 task. 
We show that the state after the first round of operations can only successfully complete the task for either $f(x,y)=0$ or $f(x,y)=1$, but not both. 
Thus, the state after the first round determines the basis of measurement $b$ performed by Alice and Bob.

We begin by defining the sets of states for which the protocol can succeed in $(x,y)\in f^{-1}(0)$ instances and a second set of states for which it can succeed in $(x,y)\in f^{-1}(1)$ instances. 

\begin{definition}\label{def:01setBB84}
    For $\epsilon \in [0,1/2]$, let $S^{\epsilon}_{0}$ be the set of states $\phi_{RMM'}$ such that there exists a measurement on subsystem $M$ and a measurement on subsystem $M'$ whose probability of both returning the outcome of a measurement in the computational basis on $R$ is at least $1-\epsilon$. 
    Similarly, let $S^{\epsilon}_{1}$ be the set of states $\phi_{RMM'}$ such that there exists a measurement on subsystem $M$ and a measurement on subsystem $M'$ whose probability of both returning the outcome of a measurement in the Hadamard basis on $R$ is at least $1-\epsilon$. 
\end{definition}

Next, we work towards proving that for small enough $\epsilon$, the sets $S_0^\epsilon$ and $S_1^\epsilon$ are disjoint. 
We begin with the following lemma.

\begin{lemma}\label{lemma:condentropybounds}
    Let $h(x)=-x\log x - (1-x) \log (1-x)$ be the binary entropy function.
    Suppose ${\psi^0}_{RMM'} \in S^\epsilon_{0}$, and $\rho^0_{ZMM'}$ is obtained by measuring $R$ in the computational basis, and storing the output in variable $Z$.  
    Then
    \begin{align}
        H(Z|M)_{\rho^0_{ZMM'}} &\leq h(\epsilon) , \nonumber \\
        H(Z|M')_{\rho^0_{ZMM'}} &\leq h(\epsilon) .
    \end{align}
    Further, suppose ${\psi^1}_{RMM'} \in S^\epsilon_{1}$, and $\sigma^1_{ZMM'}$ is obtained by measuring $R$ in the Hadamard basis and storing the output in variable $Z$.
    Then
    \begin{align}
        H(Z|M)_{\sigma^1_{ZMM'}} &\leq h(\epsilon) ,\nonumber \\
        H(Z|M')_{\sigma^1_{ZMM'}} &\leq h(\epsilon) .
    \end{align}
\end{lemma}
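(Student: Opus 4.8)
The plan is to derive all four inequalities from two standard tools: the data-processing inequality for conditional entropy and Fano's inequality specialised to a binary alphabet. I would prove the computational-basis bound for $M$ in detail and observe that the other three cases are identical up to relabelling.

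First, fix $\ket{\psi^0}_{RMM'}\in S^\eps_0$. By \cref{def:01setBB84} there is a two-outcome POVM $\{N^0,N^1\}$ acting on $M$ such that, if one measures $R$ in the computational basis to obtain a bit stored in $Z$ and simultaneously applies $\{N^z\}$ to $M$ to obtain a bit stored in a register $\hat Z$, then $\Pr[\hat Z=Z]\geq 1-\eps$. Let $\rho^0_{ZMM'}$ be the state after measuring $R$ (so $Z$ is classical and $\rho^0_{ZM}$ is a cq-state), and let $\tau$ be the further state obtained by also recording $\hat Z$. Since $\hat Z$ is produced from $M$ alone by the measurement channel, data processing for conditional entropy gives $H(Z|M)_{\rho^0}=H(Z)-I(Z:M)_{\rho^0}\leq H(Z)-I(Z:\hat Z)_\tau = H(Z|\hat Z)_\tau$; intuitively, replacing the quantum register $M$ by the processed classical outcome $\hat Z$ in the conditioning slot can only increase the conditional entropy.

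Next I would bound $H(Z|\hat Z)_\tau$. Because $Z$ is a single bit and $\Pr[Z\neq\hat Z]\leq\eps$, Fano's inequality on a binary alphabet yields $H(Z|\hat Z)_\tau\leq h\!\left(\Pr[Z\neq\hat Z]\right)\leq h(\eps)$, where the last step uses monotonicity of $h$ on $[0,1/2]$ (the only regime in which the lemma is used; for $\eps\geq 1/2$ the bound is vacuous since $H(Z|\cdot)\leq 1$ always). To avoid invoking Fano one can argue directly: letting $E$ be the indicator random variable of the event $Z\neq\hat Z$, one has $H(Z|\hat Z)\leq H(ZE|\hat Z)=H(E|\hat Z)+H(Z|E\hat Z)\leq H(E)= h(\Pr[E=1])$, the term $H(Z|E\hat Z)$ vanishing because, given the value of $\hat Z$ and of $E$, the bit $Z$ is determined. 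Chaining the two estimates gives $H(Z|M)_{\rho^0}\leq h(\eps)$.

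The remaining three bounds follow verbatim. For $H(Z|M')_{\rho^0}$ one uses instead the POVM on $M'$ guaranteed by $\ket{\psi^0}\in S^\eps_0$. For the two Hadamard-basis bounds one replaces "computational basis" by "Hadamard basis" throughout, works with $\sigma^1_{ZMM'}$ in place of $\rho^0_{ZMM'}$, and uses the POVMs on $M$ and on $M'$ guaranteed by $\ket{\psi^1}\in S^\eps_1$. I do not expect a genuine obstacle here; the only points needing care are (i) that the guessing probability $\Pr[Z=\hat Z]$ is evaluated in exactly the state $\tau$ to which data processing is applied — which is legitimate because the measurement on $R$ and the measurement on $M$ (or $M'$) act on disjoint systems and hence commute — and (ii) that it is precisely the binary alphabet that removes the usual $\Pr[Z\neq\hat Z]\log(|\mathcal Z|-1)$ term from Fano, leaving the clean bound $h(\eps)$.
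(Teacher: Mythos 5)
Your proposal is correct and follows essentially the same route as the paper's proof: extract the guessing measurement guaranteed by membership in $S^\eps_0$ (resp.\ $S^\eps_1$), apply the data-processing inequality to replace conditioning on $M$ (or $M'$) by conditioning on the classical guess, and conclude with binary Fano, handling the other three cases by relabelling. Your added remarks on the monotonicity of $h$ on $[0,1/2]$ and the self-contained derivation of Fano are sound refinements but do not change the argument.
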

\begin{proof} We will prove that 
    \begin{align}
        H(Z|M)_{\rho^0_{ZMM'}} &\leq h(\epsilon) .
    \end{align}
    The remaining statements are similar. 
    Because ${\psi}_{RMM'}^0 \in S^\epsilon_{0}$, there exists a measurement on $M$ that determines the measurement outcome from measuring $R$ with probability $1-\epsilon$. 
    Let $W$ denote this measurement outcome from measuring $M$. 
    Then, Fano's inequality (\cref{thm:Fano}) gives that 
    \begin{align}
        \Pr(Z\neq W) \leq \epsilon \, \rightarrow H(Z|W)_{\rho^0}\leq h(\epsilon) .
    \end{align}
    By data processing inequality, we further obtain that
    \begin{align}
        H(Z|M)_{\rho^0} \leq H(Z|W)_{\rho^0} \leq h(\epsilon) ,
    \end{align}
    as needed. 
\end{proof}

Finally, we prove that the sets $S_0^\epsilon$ and $S_1^\epsilon$ are disjoint, as needed.\footnote{Note that this lemma and \cref{lemma:fRemptyintersection} are similar to lemmas appearing in \cite{bluhm2021position}, which in turn are similar to statements in \cite{buhrman2013garden}.}  

\begin{lemma}\label{lemma:fBB84emptyintersection}
    If $\epsilon\leq 0.11$, then we have $S^{\epsilon}_{0} \cap S^{\epsilon}_{1} = \emptyset$.
\end{lemma}
\begin{proof}
    Because $\psi^0_{RMM'} \in S^{\epsilon}_{0}$, we get from \cref{lemma:condentropybounds} that
    \begin{align}
        H(Z|M)_{\rho^0} \leq h(\epsilon) ,
    \end{align}
    where $\rho^0_{ZMM'}$ is obtained from ${\psi^0}_{RMM'}$ by measuring $R$ in the computational basis. 
    By \cref{thm:CIT} (CIT), this means
    \begin{align}
        H(Z|M')_{\sigma^0} \geq 1-h(\epsilon) . 
    \end{align}
    where $\sigma^0_{ZMM'}$ is obtained by measuring ${\psi^0}_{RMM'}$ in the Hadamard basis. 
    Now consider $\psi_{RMM'}^1\in S^{\epsilon}_{1}$. 
    By \cref{lemma:condentropybounds} this has
    \begin{align}
        H(Z|M')_{\sigma^1_{ZMM'}} \leq h(\epsilon) ,
    \end{align}
    with $\sigma^1$ obtained from $\psi^1_{RMM'}$ by measuring $R$ in the Hadamard basis. 
    But then
    \begin{align}
        H(Z|M')_{\sigma^0} - H(Z|M')_{\sigma^1} \geq 1-2h(\epsilon) .
    \end{align}
    Now we apply continuity of the conditional entropy (\cref{thm:condentropycont}) to upper bound this entropy difference in terms of the trace distance between the states $\sigma_{ZMM'}^0$, $\sigma^1_{ZMM'}$. 
    Defining $\Delta = \frac{1}{2}\Vert\sigma^0_{ZMM'}-\sigma^1_{ZMM'} \Vert_1$ and noting that $\Vert\sigma^0_{ZMM'}-\sigma^1_{ZMM'} \Vert_1 \geq \Vert\sigma^0_{ZM'}-\sigma^1_{ZM'} \Vert_1$, 
    \begin{align}
        1-2h(\epsilon)\leq H(Z|M')_{\sigma^0} - H(Z|M')_{\sigma^1} \leq 2\Delta + (1+\Delta) h\left(\frac{\Delta}{1+\Delta}\right) .
    \end{align}
    When $\epsilon<0.11$, $1-2h(\epsilon)>0$ and this places a non-trivial lower bound on $\Delta$. 
    But then
    \begin{align}
        2\Delta = \Vert\sigma^0_{ZMM'}-\sigma^1_{ZMM'} \Vert_1 \leq \Vert\psi^0_{RMM'}-\psi^1_{RMM'}\Vert_1
    \end{align}
    by monotonicity of the trace distance, so that states in $S_0^\epsilon$ and $S_1^\epsilon$ are separated by a non-zero distance, which proves the lemma. 
\end{proof}

The fact that $S_0^\epsilon \cap S_1^\epsilon=\emptyset$ will be used in our reduction from $f$-BB84 to $\SMP^*$. 
We give that reduction in \cref{sec:reduction}. 
Next, we prove a similar set separation for $f$-routing. 
This will allow us to use the same reduction for $f$-routing as well. 

%%%%%%%%%%%%%%%%%%%%%%%%%%%%%%%%%%%%%%%%%%%%%%%%%%%%%%%%
\section{Analysis of \texorpdfstring{$f$}{TEXT}-routing}
%%%%%%%%%%%%%%%%%%%%%%%%%%%%%%%%%%%%%%%%%%%%%%%%%%%%%%%%

%%%%%%%%%%%%%%%%%%%%%%%%%%%%%%%%%%%%%%%%%%%%%%%%%%%%%%%%
\subsection{Definition and the strategy model}
%%%%%%%%%%%%%%%%%%%%%%%%%%%%%%%%%%%%%%%%%%%%%%%%%%%%%%%%

We start by giving the following definition of a \emph{qubit $f$-routing} task. 

\begin{definition}\label{def:qubitfrouting}
    A \textbf{qubit $f$-routing} task is defined by a choice of Boolean function $f:\{0,1\}^{n}\times \{0,1\}^{n}\rightarrow \{0,1\}$, and a $2$ dimensional Hilbert space $\mathcal{H}_Q$.
    Inputs $x\in \{0,1\}^{n}$ and system $Q$ are given to Alice, and input $y\in \{0,1\}^{n}$ is given to Bob.
    Alice and Bob exchange one round of communication, with the combined systems received or kept by Alice labelled $M$ and the systems received or kept by Bob labelled $M'$.
    Label the combined actions of Alice and Bob in the first round as $\mathcal{N}^{x,y}_{Q\rightarrow MM'}$. 
    The qubit $f$-routing task is completed $\epsilon$-correctly on an input $(x,y)$ if Alice can recover $Q$ when $f(x,y)=0$ and Bob can recover $Q$ when $f(x,y)=1$, each with fidelity at least $1-\epsilon$. 
    More formally, the protocol is $\epsilon$-correct if there exists a channel $\mathcal{D}^{x,y}_{M\rightarrow Q}$ such that
    \begin{align}
        \text{when}\,\,\, f(x,y)=0,\,\,\,F\left(\mathcal{D}^{x,y}_{M\rightarrow Q} \circ\tr_{M'} \circ\mathcal{N}^{x,y}_{Q\rightarrow MM'}(\Psi^+_{RQ}),\Psi^+_{RQ}\right) \geq 1-\epsilon ,
    \end{align}
    and there exists a channel $\mathcal{D}^{x,y}_{M'\rightarrow Q}$ such that
    \begin{align}
         \text{when} \,\,\, f(x,y)=1,\,\,\, F\left(\mathcal{D}^{x,y}_{M'\rightarrow Q} \circ\tr_{M} \circ\mathcal{N}^{x,y}_{Q\rightarrow MM'}(\Psi^+_{RQ}),\Psi^+_{RQ}\right) \geq 1-\epsilon .
    \end{align}
\end{definition}

Next, we give a fully general model capturing strategies that complete the $f$-routing task in the form of a non-local quantum computation.

\begin{enumerate}
    \item Alice and Bob share a resource system $\psi_{AB}$ with $A$ held by Alice and $B$ held by Bob. 
    \item The referee prepares $\Psi^+_{RQ}$ and hands $Q$ to Alice.  
    \item At the same time as the above, Alice receives $x\in\{0,1\}^n$ and Bob receives $y\in\{0,1\}^n$.
    \item Alice applies $\mathcal{N}^x_{QA\rightarrow M_0M_0'}$, Bob applies $\mathcal{M}^y_{B\rightarrow M_1M_1'}$. Label $M=M_0M_1$, $M'=M_0'M_1'$ so that their joint state after the first round is 
    \begin{align}
        \rho_{RMM'}=\mathcal{N}^x_{QA\rightarrow M_0M_0'}\otimes \mathcal{M}^y_{B\rightarrow M_1M_1'}(\Psi^+_{RQ}\otimes \psi_{AB}) .
    \end{align} 
    \item $M_0$ and $M_1$ are sent to Alice, so that she holds $M$. At the same time, $M_0'$ and $M_1'$ are sent to Bob so that he holds $M'$. The inputs $x$ and $y$ are copied and sent to both parties. 
    \item Alice and Bob both compute $f(x,y)$. If $f(x,y)=0$, Alice applies a channel $\mathcal{D}^{x,y}_{M\rightarrow Q}$ and returns $Q$ to the referee. If $f(x,y)=1$, Bob applies a channel $\mathcal{D}^{x,y}_{M'\rightarrow Q}$ and returns $Q$ to the referee.
\end{enumerate}
See \cref{fig:fBB84andfR} for an illustration of this general strategy. 

%%%%%%%%%%%%%%%%%%%%%%%%%%%%%%%%%%%%%%%%%%%%%%%%%%%%%%%%
\subsection{Routing is determined in the first round}
%%%%%%%%%%%%%%%%%%%%%%%%%%%%%%%%%%%%%%%%%%%%%%%%%%%%%%%%

In this section, we adapt results from \cite{buhrman2013garden,bluhm2021position} that show the side on which a qubit maximally entangled with the reference can be returned to the referee is already determined after Alice and Bob apply their operations $\mathcal{N}^x, \mathcal{M}^y$. 
In other words, the state $\rho_{RMM'}$ determines where the qubit will be routed. 

To see this, we begin by defining sets of states for which the qubit can be produced on the left or right, respectively. 
\begin{definition}\label{def:01setrouting}
    We define the 0-set $\tilde{S}_0^{\epsilon}$ and 1-set $\tilde{S}_1^{\epsilon}$ as
    \begin{align}
        \tilde{S}_0^{\epsilon} &= \left\{\rho_{MM'R}: \exists \, \mathcal{N}_{M\rightarrow Q} \,s.t.\, F\left( \mathcal{N}_{M\rightarrow Q} \circ \tr_{M'} (\rho_{RMM'}),\Psi^+_{RQ}\right) \geq 1-\epsilon\right\} , \nonumber \\
        \tilde{S}_1^\epsilon &= \left\{\rho_{MM'R}: \exists \, \mathcal{N}_{M'\rightarrow Q} \,s.t.\, F\left(\mathcal{N}_{M'\rightarrow Q} \circ \tr_{M} (\rho_{RMM'}),\Psi^+_{RQ}\right) \geq 1- \epsilon  \right\} . \nonumber 
    \end{align}
\end{definition}
We would like to show that the sets $\tilde{S}_0^\epsilon$ and $\tilde{S}_1^\epsilon$ do not overlap when $\epsilon$ is suitably small.
Intuitively, the non-overlap of these sets indicates that the entanglement with $R$ has been brought to either Alice or Bob after the first round of operations -- if there is a way to recover the entanglement on the left then there is not one on the right, and vice versa. 
This can be understood as a consequence of the monogamy of entanglement.  

From here we can prove the following. 
\begin{lemma}\label{lemma:fRemptyintersection}
    If $\epsilon<\tilde{\epsilon}_0\approx 0.00085$, then $\tilde{S}_0^\epsilon \cap \tilde{S}_1^\epsilon =\emptyset$. 
\end{lemma}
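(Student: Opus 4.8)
The plan is to argue by contradiction using \cref{lemma:Plowerbound}. Suppose some state $\rho_{RMM'}\in\tilde{S}_0^{\eps}\cap\tilde{S}_1^{\eps}$. From it I would manufacture one state that is genuinely in $\tilde{S}_0^{0}$ and one that is genuinely in $\tilde{S}_1^{0}$, both within purified distance $\eps$ of a common state, hence within purified distance $2\eps$ of each other; \cref{lemma:Plowerbound} then forces $2\eps\ge\sqrt3/2$, i.e.\ $\eps\ge\sqrt3/4$, contradicting $\eps<\sqrt3/4$.

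To build the two states, fix a purification $\ket{\rho}_{RMM'F}$ of $\rho$ and let $V_{M\to QE}$ and $W_{M'\to Q'E'}$ be Stinespring dilations of the recovery channels guaranteed by $\rho\in\tilde{S}_0^{\eps}$ and $\rho\in\tilde{S}_1^{\eps}$ (with $Q'$ an isomorphic copy of $Q$). Applying both gives a pure state $\ket{\Theta}_{RQEQ'E'F}=(V\otimes W)\ket{\rho}$. Tracing out $EQ'E'F$ returns $\mathcal N^0\circ\tr_{M'}(\rho)$, which by hypothesis is within purified distance $\eps$ of the pure state $\Psi^+_{RQ}$; since $\ket{\Theta}$ purifies it, Uhlmann's theorem yields a unit vector $\ket{\tilde\phi^0}_{EQ'E'F}$ with $P\bigl(\ket{\Theta},\ket{\Psi^+}_{RQ}\otimes\ket{\tilde\phi^0}\bigr)\le\eps$. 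Symmetrically, tracing out $QEE'F$ returns $\mathcal N^1\circ\tr_{M}(\rho)$, giving a unit vector $\ket{\tilde\phi^1}_{QEE'F}$ with $P\bigl(\ket{\Theta},\ket{\Psi^+}_{RQ'}\otimes\ket{\tilde\phi^1}\bigr)\le\eps$.

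Next I would fix the bipartition in which Alice holds $QE$ and Bob holds $Q'E'$, treating $F$ as an external purifying register. Then $\rho^0:=\tr_F\bigl(\ketbra{\Psi^+}{\Psi^+}_{RQ}\otimes\ketbra{\tilde\phi^0}{\tilde\phi^0}\bigr)$ lies in $\tilde{S}_0^{0}$, since from Alice's side $QE$ one recovers $\Psi^+$ exactly by discarding the dilation environment $E$; symmetrically the state $\rho^1$ built from $\ket{\tilde\phi^1}$ lies in $\tilde{S}_1^{0}$ for the same bipartition. Monotonicity of the purified distance under $\tr_F$ gives $P\bigl(\rho^0,\tr_F(\ketbra{\Theta}{\Theta})\bigr)\le\eps$ and $P\bigl(\rho^1,\tr_F(\ketbra{\Theta}{\Theta})\bigr)\le\eps$, so $P(\rho^0,\rho^1)\le 2\eps$ by the triangle inequality. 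Since \cref{lemma:Plowerbound} gives $P(\rho^0,\rho^1)\ge\sqrt3/2$, we get $\eps\ge\sqrt3/4$, the desired contradiction.

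The step I expect to demand the most care is making the approximate passage land with the right constant. It is important to dilate the recovery channels to isometries and apply them \emph{before} using Uhlmann, so that the nearby states come out as honest normalized product states $\ket{\Psi^+}\otimes\ket{\tilde\phi}$; the naive alternative of pulling a state close to $\Psi^+_{RQ}$ back along $V^\dagger$ produces a sub-normalized vector whose normalization defect degrades the bound below $\sqrt3/4$. One must also check that $\rho^0$ and $\rho^1$ are $\tilde{S}_0^{0}$- and $\tilde{S}_1^{0}$-states for a \emph{single common} Alice/Bob cut — which is exactly what the symmetric choice ``Alice holds $QE$, Bob holds $Q'E'$'' achieves — while the fidelity computation underlying the $\sqrt3/2$ separation is already packaged inside \cref{lemma:Plowerbound}.
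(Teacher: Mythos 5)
Your proof is correct and follows essentially the same route as the paper: both arguments combine the $\sqrt{3}/2$ separation of \cref{lemma:Plowerbound} with a triangle inequality costing $2\eps$, yielding a contradiction (equivalently, a strictly positive distance) when $\eps<\sqrt{3}/4$. The only difference is that you explicitly substantiate, via Stinespring dilation and Uhlmann's theorem, the claim that a state in $\tilde{S}_i^{\eps}$ is within purified distance $\eps$ of a state in $\tilde{S}_i^{0}$ for a common $M/M'$ cut --- a step the paper asserts without proof --- and your care about normalization and the common bipartition is exactly what makes that step go through with the right constant.
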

\begin{proof}
    First consider $\psi^0\in \tilde{S}^\epsilon_0$. 
    Then by the definition of the set $\tilde{S}^\epsilon_0$ and the Fuchs-van de Graaf inequalities (\cref{eq:FVDG}), we have that
    \begin{align}\label{eq:sqrteps}
        \frac{1}{2}\Vert\mathcal{N}_{M\rightarrow Q}(\psi^0_{RM})-\Psi^+_{RQ} \Vert_1 \leq \sqrt{\epsilon}.
    \end{align}
    Then we have
    \begin{align}\label{eq:LBEsq}
        E_{\sq}(R:M)_{\psi^0} &\geq E_{\sq}(R:Q)_{\mathcal{N}_{M\rightarrow Q}(\psi^0)} \nonumber \\
        &\geq E_{\sq}(R:Q)_{\Psi^+} - 4\sqrt{\epsilon} \,n_{R} - g(\sqrt{\epsilon}) \nonumber \\
        &= (1-4\sqrt{\epsilon})n_{R} -g(\sqrt{\epsilon})
    \end{align}
    where the first inequality is data processing (\cref{eq:squasheddataprocessing}), the second comes from continuity of the squashed entanglement (\cref{thm:squashedcontinuity}) and \cref{eq:sqrteps}, and the last line is from evaluating the squashed entanglement on the maximally entangled state. 
    Note that we defined $g(x)=2(1+x)h\left(\frac{x}{1+x} \right)$.
    Next we apply monogamy of the squashed entanglement (\cref{eq:squashedmonogamy}), 
    \begin{align}
        E_{\sq}(R:M)_{\psi^0}+E_{\sq}(R:M')_{\psi^0} \leq E_{\sq}(R:MM')_{\psi^0} \leq n_{R}
    \end{align}
    where in the second inequality we used that the squashed entanglement is bounded above by the minimal log-dimension of its two inputs.
    Combined with the lower bound \eqref{eq:LBEsq}, the above gives
    \begin{align}
        E_{\sq}(R:M')_{\psi^0} \leq 4\sqrt{\epsilon}\, n_{R}+g(\sqrt{\epsilon}).
    \end{align}
    But now, we can also lower bound $E_{\sq}(R:M')_{\psi^1}$ using that $\psi^1\in \tilde{S}^\epsilon_1$, using the same sequence of steps as in \cref{eq:LBEsq}. 
    This gives that
    \begin{align}
        E_{\sq}(R:M')_{\psi^1} \geq (1-4\sqrt{\epsilon})n_{R} - g(\sqrt{\epsilon}).
    \end{align}
    If the upper bound on $E_{\sq}(R:M')_{\psi^0}$ is smaller than the lower bound on $E_{\sq}(R:M')_{\psi^1}$, then $E_{\sq}(R:M')_{\psi^1}\neq E_{\sq}(R:M')_{\psi^0}$, so that also $\psi^1\neq \psi^0$. 
    Comparing the upper and lower bounds, we have that  $\psi^1 \neq \psi^0$ whenever
    \begin{align}
        (1-4\sqrt{\epsilon})n_{R} - g(\sqrt{\epsilon}) > 4\sqrt{\epsilon}\, n_R+g(\sqrt{\epsilon}).
    \end{align}
    If for all $\psi^0\in \tilde{S}^\epsilon_0$, $\psi^1\in \tilde{S}^\epsilon_1$ we have $\psi^1\neq \psi^0$, then $\tilde{S}^\epsilon_0\cap \tilde{S}^\epsilon_1 = \emptyset$, as needed. 
    Thus $\tilde{S}^\epsilon_0\cap \tilde{S}^\epsilon_1 = \emptyset$ whenever the above inequality is satisfied. 
    We call the largest value such that the above inequality holds $\tilde{\epsilon}_0$. 
    Numerically, we find that for $n_R=1$, $\tilde{\epsilon}_0\approx 0.00085$. 
    As $n_R\rightarrow \infty$, $\tilde{\epsilon}_0$ approaches $1/64\approx 0.015$. 
\end{proof}

%%%%%%%%%%%%%%%%%%%%%%%%%%%%%%%%%%%%%%%%%%%%%%%%%%%%%%%%
\section{Reduction to \texorpdfstring{$\SMP^*$}{TEXT} and lower bounds}\label{sec:reduction}
%%%%%%%%%%%%%%%%%%%%%%%%%%%%%%%%%%%%%%%%%%%%%%%%%%%%%%%%

%%%%%%%%%%%%%%%%%%%%%%%%%%%%%%%%%%%%%%%%%%%%%%%%%%%%%%%%
\subsection{Reduction to \texorpdfstring{$\SMP^*$}{TEXT}}
%%%%%%%%%%%%%%%%%%%%%%%%%%%%%%%%%%%%%%%%%%%%%%%%%%%%%%%%

In this section, we consider lower bounds on the number of quantum gates Alice and Bob need to apply in order to successfully complete an $f$-BB84 or $f$-routing task. 
We show for certain functions such as the inner product function, this is linear (up to logarithmic factors) in the number of classical input bits $n$.

In more detail, we consider decomposing Alice and Bob's operations $\mathcal{N}^x$ and $\mathcal{M}^y$ into gates drawn from $\{T, H, S, CNOT\}$ acting on at most two qubits, and single qubit measurements in the computational basis. 
Since we want to bound Alice and Bob's quantum operations, we will allow them free classical processing. 
This classical processing could take as inputs $x,y$ and the outcomes from any mid-circuit measurements performed by Alice and Bob. 
In particular, the choice of gates later in the circuit can be conditioned on the outputs of classical processing involving earlier measurement outcomes. 
Notice that if we naively purify such a protocol, the classical processing which takes mid-circuit measurement outcomes as inputs will become a quantum operation. 
Thus bounding quantum operations in the purified view doesn't suffice to bound the quantum operations in the un-purified view, and hence doesn't bound the operations Alice and Bob are required to implement physically. 
Instead, we must directly bound the quantum operations in the un-purified view.

To do this, we first prove a reduction from $f$-BB84 or $f$-routing to $\SMP^*$. 

\begin{theorem}\label{thm:reduction}
    Suppose $P$ is an $f$-BB84 protocol that is $\epsilon<\epsilon_0=0.11$ correct, or an $f$-routing protocol that is $\epsilon<\tilde{\epsilon}_0\approx 0.00085$ correct, on a $1-\delta$ fraction of the inputs, uses $C_G(f)$ gates drawn from a gate set of size $4$ and also uses $C_M(f)$ single qubit measurements in the computational basis.
    Then, 
    \begin{align}\label{eq:fRgatelowerbound}
        (\log(q)+1)(2C_G(f) +C_M(f)) \geq \SMP^*_{\epsilon',\delta}(f) ,
    \end{align}
    where $q$ is the number of qubits held by Alice and Bob, and $\SMP^{*}_{\epsilon',\delta}(f)$ denotes the minimal message size needed to compute $f(x,y)$ in the $\SMP^*$ model with correctness $\epsilon'=\epsilon/\epsilon_0$ for $f$-BB84 and $\epsilon'=\epsilon/\tilde{\epsilon}_0$ for $f$-routing on at least $1-\delta$ fraction of possible inputs. 
\end{theorem}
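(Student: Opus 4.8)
The plan is to compile the given attack into an explicit $SMP^*$ protocol for $f$ and then read off its message length; since $SMP^*_{\delta,\eps'}(f)$ is a minimum over all $(\delta,\eps')$-correct such protocols, an upper bound on the message length of this one protocol is exactly \cref{eq:fRgatelowerbound}. For the shared entanglement I would use (a purification of) the attack's first-round resource state $\ket{\psi}_{AB}$, with $A$ on Alice's side and $B$ on Bob's; Alice additionally prepares a local copy of $\Psi^+_{RQ}$, and any common coins used by the attack's classical processing are drawn from shared randomness, which $SMP^*$ also supplies. On input $x$, Alice runs her first-round instrument $\mathcal{N}^x$ on $QA$, recording her mid-circuit outcomes $w_A$ together with the \emph{realized} list of gates $G_A$ she applied (a function of $x$, the shared randomness, and $w_A$ alone, since her feed-forward is local); her message is a classical encoding of $(G_A,w_A)$, and Bob does the same for $\mathcal{M}^y$ on $B$. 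The referee, who may be computationally unbounded and knows the fixed protocol (including a description of $\ket{\psi}_{AB}$), simulates $G_A$ on $QA$ and $G_B$ on $B$ applied to $\Psi^+_{RQ}\otimes\ket{\psi}_{AB}$, post-selects the measured qubits onto $w_A,w_B$, discards and regroups the remaining wires into $R$, $M=M_0M_1$, $M'=M_0'M_1'$, and thereby holds an exact classical description of the post-selected first-round state $\rho^{w_A,w_B}_{RMM'}$; it then tests whether this state lies in $\tilde{S}_0^{\eps_0}$ or in $\tilde{S}_1^{\eps_0}$ (resp.\ $S_0^{\eps_0}$ or $S_1^{\eps_0}$) --- at most one holds, by \cref{lemma:fRemptyintersection} (resp.\ \cref{lemma:fBB84emptyintersection}) --- and outputs the corresponding bit.

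To see this is $(\delta,\eps')$-correct, note that the reconstruction is exact: Alice's and Bob's first-round instruments act on disjoint systems, so post-selecting their outcomes commutes and reproduces the physical branch, and each outcome pair $(w_A,w_B)$ reaches the referee with exactly its physical probability $p_{w_A,w_B}$. Hence the protocol errs on an input only when the reconstructed branch fails to lie in $\tilde{S}^{\eps_0}_{f(x,y)}$ (resp.\ $S^{\eps_0}_{f(x,y)}$). Now, on any input on which the attack is $\eps$-correct, the instrument-averaged state $\rho_{RMM'}=\sum_{w}p_{w}\,\rho^{w}_{RMM'}$ lies in $\tilde{S}^{\eps}_{f(x,y)}$ (resp.\ $S^{\eps}_{f(x,y)}$) by definition; fix the recovery channel (resp.\ pair of POVMs) witnessing this. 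The associated error quantity --- $1-\bra{\Psi^+}\mathcal{N}\circ\tr_{M'}(\cdot)\ket{\Psi^+}$ for a fixed $\mathcal{N}$, resp.\ the reference-measurement guessing error for fixed POVMs --- is \emph{linear} in the state, so its $p_w$-average is its value at $\rho_{RMM'}$, which is at most $\eps$ (resp.\ $\eps^2$ for the squared purified distance). Markov's inequality then bounds by $O(\eps/\eps_0)$ (resp.\ $O(\eps/\tilde{\eps}_0)$) the fraction of branches with error exceeding the disjointness threshold $\eps_0=0.11$ (resp.\ $\tilde{\eps}_0=\sqrt{3}/4$); on the remaining branches the reconstructed state lies strictly inside $\tilde{S}^{\eps_0}_{f(x,y)}$ (resp.\ $S^{\eps_0}_{f(x,y)}$) and hence, by disjointness, the referee outputs $f(x,y)$. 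Absorbing constants gives correctness probability at least $1-\eps'$ for the $\eps'$ of the statement, on every input on which the attack succeeds, i.e.\ on a $1-\delta$ fraction.

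It remains to count the message length. The realized circuit $G_A$ has at most $C_G(f)$ gates from a size-$4$ set and at most $C_M(f)$ single-qubit computational-basis measurements, all acting on the at most $q$ qubits of the two parties. Each gate is one of at most $4q^2$ choices of (ordered qubit pair, gate type), hence encodable in $2(\log q+1)$ bits; each measurement is one of at most $2q$ choices of (qubit address, outcome bit), hence encodable in $\log q+1$ bits (an elementary self-delimiting encoding of the token list contributes only lower-order overhead). Thus $\abs{m_A}\le(\log q+1)\big(2C_G(f)+C_M(f)\big)$, and symmetrically for $\abs{m_B}$, which is \cref{eq:fRgatelowerbound}.

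I expect the conceptual crux to be the tension that forces the target to be $SMP^*$ rather than plain $SMP$: to keep the messages of size $O((\log q)(C_G+C_M))$ rather than $\Theta(q)$, the referee cannot be sent the registers $M,M'$ (e.g.\ by teleportation) but must instead reconstruct $\rho^{w_A,w_B}_{RMM'}$ from each party's realized circuit and outcome record --- which only works if each party can actually run its circuit, and that requires each party to already hold its half of the \emph{entangled} resource $\ket{\psi}_{AB}$, which is impossible with shared randomness alone. The second point needing care is that $\eps$-correctness of the attack controls only the instrument-\emph{averaged} first-round state while the referee reconstructs a single post-selected branch; the linearity-plus-Markov step above is precisely the bridge, and is also where the hypotheses $\eps<\eps_0$, $\eps<\tilde{\eps}_0$ and the value of $\eps'$ enter.
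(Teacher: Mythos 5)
Your proposal is correct and follows essentially the same route as the paper's proof: encode the realized gate list and mid-circuit measurement outcomes as the classical messages, let the referee reconstruct the post-selected first-round state and test membership in the disjoint sets $\tilde{S}_0^{\eps_0},\tilde{S}_1^{\eps_0}$ (resp.\ $S_0^{\eps_0},S_1^{\eps_0}$), and use linearity against the pure target plus Markov's inequality to pass from average-case correctness to per-branch correctness. Your gate/measurement encoding count and your explanation of why the reduction must target $SMP^*$ rather than $SMP$ also match the paper's.
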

\begin{proof}
We consider an $f$-BB84 protocol and show it defines an $\SMP^*$ protocol. 
The referee holds a classical description of the initial resource state. 
Alice and Bob share the resource system. 
Alice and Bob's strategy will be to send the referee a description of their local operations. 
We consider a decomposition of Alice and Bob's operations into gates and measurements. 
Alice and Bob apply their operations to their shared resource state and the input system. 
As they do so, they keep a record of the gates they apply (which may be computed using mid-circuit measurement outcomes) and their measurement outcomes $m$, then send this to the referee. 
The referee will then compute a classical description of the state $\rho_{RMM'}(m)$ and determine if it is inside of ${S}_0^{\epsilon_0}$ or ${S}_1^{\epsilon_0}$. 
By \cref{lemma:fBB84emptyintersection}, these sets are disjoint so this procedure is unambiguous.
We show below that, as a consequence of correctness of the $f$-BB84 protocol, with high probability $\rho_{RMM'}(m)$ is inside the set ${S}_{f(x,y)}^{\epsilon_0}$, so that the $\SMP^*$ protocol is correct with high probability. 

To determine how much communication this protocol uses, consider that for each gate they specify the gate choice, requiring $2$ bits, and the location of the gate, which requires $2\log q$ bits for a contribution of $(2\log q + 2) C_G(f)$ bits. 
Further, to specify each measurement requires $\log q$ bits to specify where the measurement occurs plus $1$ bit to specify the measurement outcome, for a contribution of $(\log q+1)C_M(f)$. 
The total message size sent by Alice and Bob then is the left hand side of \cref{eq:fRgatelowerbound}. 

It remains to show that $\rho_{RMM'}(m)$ is inside of ${S}_{f(x,y)}^{\epsilon_0}$ with high probability over the measurement outcomes $m$. 
We first establish this for a pair of inputs $(x,y)\in f^{-1}(0)$ which is $\epsilon$-correct; an input $(x,y)\in f^{-1}(1)$ can be handled similarly. 
By correctness of the $f$-BB84 protocol, we have that
\begin{align}
        \sum_b \tr(\Pi^{f(x,y),b}_R\otimes \Lambda^{x,y,b}_{M} \otimes \Lambda^{x,y,b}_{M'} \rho_{RMM'}) \geq 1-\epsilon .
\end{align}
Using that $\rho_{RMM'}=\sum_m p_m\rho_{RMM'}(m)$, this says
\begin{align}
    \sum_m p_m \sum_b \tr(\Pi^{f(x,y),b}_R\otimes \Lambda^{x,y,b}_{M} \otimes \Lambda^{x,y,b}_{M'} \rho_{RMM'}(m)) \geq 1-\epsilon .
\end{align}
Define the random variable $P_m=1-\sum_b \tr(\Pi^{f(x,y),b}_R\otimes \Lambda^{x,y,b}_{M} \otimes \Lambda^{x,y,b}_{M'} \rho_{RMM'}(m))$, so that the above reads $\langle P_m \rangle \leq \epsilon$. 
So long as $P_m \leq {\epsilon}_0$ we will have that $\rho_{RMM'}(m)\in {S}^{\epsilon_0}_{0}$, so the referee fails only when $P_m > {\epsilon}_0$. 
By Markov's inequality, this occurs with probability
\begin{align}
    \Pr[P_m > \epsilon_0] \leq \frac{\epsilon}{\epsilon_0}. 
\end{align}
Thus the referee succeeds with probability $p\geq 1-\epsilon/\epsilon_0$, so the $\SMP^*$ protocol is $\epsilon'=\epsilon/\epsilon_0$ correct, as needed. 
A similar argument establishes $\epsilon'$-correctness of the $\SMP^*$ protocol on inputs $(x,y)\in f^{-1}(1)$ which are $\epsilon$-correct in the $f$-BB84 protocol. 
Because this argument shows $f$-BB84 correctness on a given input implies $\SMP^*$ correctness on the same input, if the $f$-BB84 is $\epsilon$-correct on a fraction $1-\delta$ of inputs the $\SMP^*$ protocol is $\epsilon'$ correct on that fraction of inputs as well. 

The proof for $f$-routing is the same as the above, but now we replace \cref{lemma:fBB84emptyintersection} with \cref{lemma:fRemptyintersection}. 
In this setting, the referee now looks at the state $\rho_{RMM'}$ and determines if this is in $\tilde{S}_0^{\tilde{\epsilon}_0}$ or $\tilde{S}_{1}^{\tilde{\epsilon}_0}$. 
Otherwise, the proof is the same. 
\end{proof}

It is worth commenting on why the reduction from $f$-routing is to $\SMP^*$ rather than just $\SMP$. 
To understand this, notice that Alice and Bob cannot necessarily compute their gate choices directly from their inputs $x$ and $y$. 
Instead, they may use the outcomes of mid-circuit measurements to choose gates. 
To sample from these measurement outcomes, Alice and Bob need to share the same entangled state in their $\SMP$ protocol as is held in the $f$-routing protocol. 
A natural thought to avoid this is to have Alice and Bob purify their protocols, and apply only unitaries. 
In this case, however, classical processing used in the original protocol leads to additional quantum gates in the purified protocol. 
Thus, this would lower bound not the quantum gate complexity, but instead the total complexity including any classical part, and hence give a weaker bound. 

%%%%%%%%%%%%%%%%%%%%%%%%%%%%%%%%%%%%%%%%%%%%%%%%%%%%%%%%
\subsection{Explicit lower bounds for \texorpdfstring{$f$}{TEXT}-BB84 and \texorpdfstring{$f$}{TEXT}-routing}
%%%%%%%%%%%%%%%%%%%%%%%%%%%%%%%%%%%%%%%%%%%%%%%%%%%%%%%%

As a direct consequence of \cref{thm:reduction} (the reduction from $f$-routing or $f$-BB84 to $\SMP^*$), \cref{eq:smpandoneway} (which relates $\SMP^*$ and $C^*_{\epsilon}$) and \cref{eq:IPCCbound} (which lower bounds the communication complexity of the inner product function) we can state the following linear lower bound on the number of gates required to perform $f$-routing and $f$-BB84 tasks for $\IPfunc$.

\begin{corollary}
     Suppose $\IPfunc : \{0,1\}^n \times \{0,1\}^n \to \{0,1\}$ is the mod 2 inner product function, and $P$ is an $f$-BB84 protocol for $\IPfunc$ that is $\epsilon$-correct on a $1-\delta$ fraction of the inputs with $\delta+(1-\delta)\epsilon/\epsilon_0<1/2$ with $\epsilon_0=0.11$, uses $C_G(\IPfunc)$ gates drawn from a gate set of size $4$ and uses $C_M(\IPfunc)$ single qubit measurements in the computational basis.
    Then, 
    \begin{align}\label{eq:IPgatelowerboundexplicit84}
        (\log(q)+1)(2C_G(\IPfunc) +C_M(\IPfunc)) \geq \frac{1}{2}n + 2\log(1-2\epsilon') = \Omega(n) ,
    \end{align}
    where $q$ is the number of qubits held by Alice and Bob, and $\epsilon'= \delta +(1-\delta)\cdot \epsilon/\epsilon_0$. When $\delta=0$, we obtain the above lower bound when $\epsilon<0.055$.
\end{corollary}

\begin{corollary}
    Suppose $\IPfunc : \{0,1\}^n \times \{0,1\}^n \to \{0,1\}$ is the mod 2 inner product function, and $P$ is an $f$-routing protocol for $\IPfunc$ that is $\epsilon$-correct on a $1-\delta$ fraction of the inputs with $\delta+(1-\delta)\epsilon/\tilde{\epsilon}_0<1/2$ where $\tilde{\epsilon}_0=0.00085$, uses $C_G(\IPfunc)$ gates drawn from a gate set of size $4$ and uses $C_M(\IPfunc)$ single qubit measurements in the computational basis.
    Then, 
    \begin{align}\label{eq:IPgatelowerboundexplicitfR}
        (\log(q)+1)(2C_G(\IPfunc) +C_M(\IPfunc)) \geq \frac{1}{2}n + 2\log(1-2\epsilon') = \Omega(n) ,
    \end{align}
    where $q$ is the number of qubits held by Alice and Bob, and $\epsilon'= \delta +(1-\delta)\cdot \epsilon/\tilde{\epsilon}_0$.
    When $\delta=0$, we obtain the above lower bound when $\epsilon<0.00042$.
\end{corollary}

The reduction in \cref{thm:reduction} can easily be applied to other known $\SMP^*$ lower bounds for explicit functions. 
For instance, for the disjointness function we can use \cref{eq:disjointnessbound} and \cref{thm:reduction} to conclude the following corollary.

\begin{corollary}
     Suppose $\Disj : \{0,1\}^n \times \{0,1\}^n \to \{0,1\}$ is the disjointness function, and $P$ is an $f$-BB84 protocol for $\Disj$ that is $\epsilon<0.11$ correct on a $1-\delta$ fraction of the inputs with $\delta+(1-\delta)\epsilon/\epsilon_0<1/2$ where ${\epsilon}_0=0.11$, uses $C_G(\Disj)$ gates drawn from a gate set of size $4$ and also uses $C_M(\Disj)$ single qubit measurements in the computational basis.
    Then, 
    \begin{align}
        (\log(q)+1)(2C_G(\Disj) +C_M(\Disj)) = \Omega(\sqrt{n}),
    \end{align}
    where $q$ is the number of qubits held by Alice and Bob.
\end{corollary}

\begin{corollary}
     Suppose $\Disj : \{0,1\}^n \times \{0,1\}^n \to \{0,1\}$ is the disjointness function, and $P$ is an $f$-routing protocol for $\Disj$ that is $\epsilon$-correct on a $1-\delta$ fraction of the inputs where $\delta+(1-\delta)\epsilon/\tilde{\epsilon}_0<1/2$, $\tilde{\epsilon}_0=0.00085$, uses $C_G(\Disj)$ gates drawn from a gate set of size $4$ and also uses $C_M(\Disj)$ single qubit measurements in the computational basis.
    Then, 
    \begin{align}
        (\log(q)+1)(2C_G(\Disj) +C_M(\Disj)) = \Omega(\sqrt{n}) ,
    \end{align}
    where $q$ is the number of qubits held by Alice and Bob.
\end{corollary}

%%%%%%%%%%%%%%%%%%%%%%%%%%%%%%%%%%%%%%%%%%%%%%%%%%%%%%%%
\subsection{Nearly matching upper bound for inner product}
%%%%%%%%%%%%%%%%%%%%%%%%%%%%%%%%%%%%%%%%%%%%%%%%%%%%%%%%

For the inner product function, our lower bound on quantum operations is tight up to logarithmic factors. 
To see this, we use the garden-hose strategy \cite{buhrman2013garden} to give an upper bound. 
This strategy uses only Bell basis measurements and classical processing to attack any $f$-routing scheme. 
Adapted to the garden-hose setting, where we only have a single element in our gate set (the two qubit Bell basis measurement), our bound becomes
\begin{align}\label{eq:specializedbound}
    2(\log q+1)\, C_M = \Omega(n).
\end{align}
We construct a scheme using $q=O(n)$ EPR pairs and $O(n)$ measurements in the garden-hose model. 
This scheme then saturates the above bound up to logarithmic factors.

To construct the protocol, recall that in the garden-hose attack Alice and Bob share $N$ EPR pairs, and make Bell basis measurements connecting either the input system $Q$ to an EPR pair, or connecting two EPR pairs. 
A useful analogy is to a set of $N$ hoses, the ends of which can be connected together in pairs or connected to the tap, which plays the role of the input system. 
The water flowing through the pipes tracks where the system $Q$ ends up. 
A garden-hose attack for the inner product is as follows. 
Consider splitting $N$ into sets of 6 hoses. 
Alice will make measurements connecting her first set of 6 to her second set of 6. 
By wiring her connections appropriately, she can apply any permutation to the hoses. 
Define the permutations,
\begin{align}
    A &= (13)(24), \nonumber \\
    B &= (35)(46), \nonumber \\
    F &= (56). 
\end{align}
We then have Alice and Bob implement the permutations
\begin{align}
    S_i = A^{x_i}B^{y_i}FB^{y_i} A^{x_i}
\end{align}
by connecting subsequent layers of hoses with appropriate measurements. 
Alice connects the tap to the 1st hose from her first set and sends the 2nd hose from her final set to Bob. 
Notice that $S_i$ swaps the water from the first to the second hose iff $x_i\cdot y_i=1$, so that the water ends up on the first hose (and so the state ends up with Alice) if $\sum_i x_i y_i = 0$ mod $2$, and with Bob otherwise, as needed.

The total number of qubits used here is at most $12$ for each of the permutations $A^{x_i}, B^{y_i}$, $F$. 
There are a constant number of these per $S_i$, so the total number of qubits used $q$ is $O(n)$. 
Each qubit is measured at most once so the total number of measurements is also $O(n)$. 
Thus we find that
\begin{align}
    2(\log q+1)C_M =O(n\log n),
\end{align}
so this protocol shows \cref{eq:specializedbound} is tight up to a $\log n$ factor. 

%%%%%%%%%%%%%%%%%%%%%%%%%%%%%%%%%%%%%%%%%%%%%%%%%%%%%%%%
\section{Discussion}
%%%%%%%%%%%%%%%%%%%%%%%%%%%%%%%%%%%%%%%%%%%%%%%%%%%%%%%%

In this paper, we have given a new linear lower bound against the inner product function for $f$-routing and $f$-BB84. 
Our bound differs from earlier work in that it applies to a natural, low complexity function (inner product), and bounds the number of quantum operations necessary for an attack rather than the amount of shared entanglement. 
Assuming quantum gates are more difficult to implement than the same number of classical gates, our bound provides a separation in difficulty between an honest and dishonest player. 
Furthermore, it does so for a scheme that is computationally feasible for an honest prover.  

A key strategy in this paper compared to earlier ones has been to focus on the computational resources of the dishonest player. 
We can compare our bounds on quantum operations to bounds on system size, which are more common in the literature. 
To do so, consider restricting the dishonest player to circuits of depth $d$.
Then, their maximal number of gates is $C\sim d q$ for $q$ the number of qubits they control. 
From our bound then we obtain
\begin{align}
    q\log q \gtrsim \frac{n}{d} .
\end{align}
A natural restriction on the depth of the dishonest player's circuit is to take $d = O(\log (n))$, the same as (for the inner product function) the classical circuit depth of the honest player.\footnote{If the dishonest player has agents sitting at constant positions, and their gates take constant (in $n$) amount of time, this is enforced by relativity.} 
With this restriction, we obtain an almost linear lower bound on $q$, $q\log q\geq n/\log(n)\Rightarrow q\geq n/(\log(n))^2$.\footnote{We thank Philip Verduyn Lunel for making this point to us.}  
This compares favourably to the $q \gtrsim \log(n)$ lower bound obtained by \cite{bluhm2021position}, and furthermore avoids the issue of the purified view of \cite{bluhm2021position} counting what can in practice be classical systems towards the quantum system size.
Thus our bound, plus the assumption that the attacker's circuit depth should be similar to the honest player's, gives a stronger bound on quantum resource system size than has been proven previously. 
We can also note that applying deeper circuits to small systems is plausibly harder than shallow circuits on larger systems, so even relaxing this assumption and relying only on the gate lower bound seems a stronger bound than a linear bound on system size. 

\vspace{0.2cm}
\textbf{Note added:} After the publication of this article, the work \cite{bluhm2026complexity} shows an equivalence between $f$-routing and $f$-BB84. 
This means we can use our lower bound on $f$-BB84 and then use this equivalence to give a lower bound on $f$-routing with improved parameters. 

\vspace{0.2cm}
\noindent \textbf{Acknowledgements:} We thank Philip Verduyn Lunel for helpful discussions. 
AM is supported by the Perimeter Institute for Theoretical Physics. 
Research at Perimeter Institute is supported in part by the Government of Canada through the Department of Innovation, Science and Economic Development Canada and by the Province of Ontario through the Ministry of Colleges and Universities.

\bibliographystyle{unsrtnat}
\bibliography{biblio}

\end{document}